\let\doendproof\endproof
\renewcommand\endproof{~\hfill$\qed$\doendproof}
\title{Biclique coverings, rectifier networks and the cost of $\varepsilon$-removal}
\author{
Szabolcs~Iv\'an\inst{1}\thanks{This research was supported by the European Union and the State of Hungary, co-financed by the European Social Fund in the framework of T\'AMOP 4.2.4.A/2-11-1-2012-0001 ‘National Excellence Program’.}
\and \'Ad\'am~D.~Lelkes\inst{2}
\and Judit~Nagy-Gy\"orgy\inst{1}\thanks{Supported by the European Union and co-funded by the European Social Fund under the project ``Telemedicine-focused research activities on the field of Mathematics, Informatics and Medical sciences'' of project number ``T\'AMOP-4.2.2.A-11/1/KONV-2012-0073''}
\and Bal\'azs~Sz\"or\'enyi\inst{3,4}
\and Gy\"orgy~Tur\'an\inst{2,3}\thanks{Partially supported by NSF grant CCF-0916708}
}
\institute{
	University of Szeged, Hungary
	\and	
	University of Illinois at Chicago
	\and
	MTA-SZTE Research Group on Artificial Intelligence
	\and
	INRIA Lille, SequeL project, France
}
\def\Cov{{\mathrm{Cov}}}
\def\Rect{{\mathrm{Rect}}}
\def\Harm{{\mathrm{H}}}
\def\polylog{{\mathrm{polylog}}}
\def\bs#1{{\boldsymbol{#1}}}
\newtheorem{openproblem}{Problem}
\begin{document}

\maketitle

\begin{abstract}
We relate two complexity notions of bipartite graphs: the \emph{minimal weight biclique covering number} $\Cov(G)$ and
the \emph{minimal rectifier network size} $\Rect(G)$ of a bipartite graph $G$.
We show that there exist graphs with $\Cov(G)\geq \Rect(G)^{3/2-\epsilon}$.
As a corollary, we establish that there exist nondeterministic finite automata (NFAs) with $\varepsilon$-transitions,
having $n$ transitions total such that the smallest equivalent $\varepsilon$-free NFA has $\Omega(n^{3/2-\epsilon})$ transitions.
We also formulate a version of previous bounds for the weighted set cover problem and discuss its connections
to giving upper bounds for the possible blow-up.
\end{abstract}

\section{Introduction}
\label{sec-intro}
  In the world of descriptive complexity, questions involving the possible blow-up when transforming a description of some mathematical
  object from a formalism to another is a central topic, with one of the first papers dating back to 1971~\cite{conf/focs/MeyerF71}.
  We are primarily interested in the cost of chain rule removals from context-free grammars (CFGs).
  That is, how large a chain-rule free CFG has to be in the worst case which is equivalent to an input CFG of size $n$, having chain rules?
  The obvious upper bound resulting from the standard transformation is $O(n^2)$.
  The best known lower bound is $\Omega(n^{3/2-\epsilon})$~\cite{Blum1983287}.
  The question is interesting since chain rule elimination is the bottleneck part of the transformation to Chomsky Normal Form.
  Despite the question being well-motivated, we have no knowledge of progress in the last three decades; the gap is still there.

  The maximal possible blow-up is not known even in the special case of regular languages.
  When a \emph{regular} language is given (e.g. by a
  nondeterministic automaton or NFA, possibly having $\varepsilon$-transitions), an equivalent ``chain-rule-free'' \emph{regular} grammar
  corresponds to a nondeterministic automaton with no $\varepsilon$-transitions. In order to define the ``blow-up'', we
  have to choose a notion for measuring the \emph{size} of an NFA -- we say that the size of an NFA is the number of its \emph{transitions}.
  Regular languages can be represented by a variety of different formalisms, some of which are more concise than the others.
  For example, transforming a regular expression (RE) to an equivalent NFA can be done within linear bounds, i.e. the cost of
  this direction is worst-case $\Theta(n)$. From RE to $\varepsilon$-free NFA the worst-case cost is $\Theta(n\log^2 n)$,
  by the upper bound result of~\cite{Hromkovic:1997:TRE:646512.695338} and the matching lower bound of~\cite{Schnitger06regularexpressions}.
  The lower bound is achieved with a language possessing a linear-size RE as well, thus it is recognized by an NFA of size $O(n)$,
  hence the cost of the NFA $\to$ $\varepsilon$-free NFA transformation is $\Omega(n\log^2 n)$.
  However, the gap between $\Omega(n\log^2 n)$ and $O(n^2)$ has not been reduced since 2006.
  It is also known that from $\varepsilon$-free NFA to RE an exponential blow-up can occur and Kleene's algorithm produces an
  RE of exponential size from an NFA.

  One of the main results of the paper is that the NFA $\to$ $\varepsilon$-free NFA transformation has worst-case cost $\Omega(n^{3/2-\epsilon})$
  for any $\epsilon>0$. It is interesting that this bound (as well as the upper bound $O(n^2)$)
  coincides with that of~\cite{Blum1983287} for the seemingly more general problem of chain rule elimination.
  The methods (as well as the models) are very different but there is also a similarity: for the lower bound of~\cite{Blum1983287},
  languages consisting of words of length $3$ were defined. In our case, we consider languages consisting of words of length $2$.
  Such languages $L\subseteq\Sigma\Delta$ can be viewed as bipartite graphs $G_L=(\Sigma,\Delta,E_L)$ with $(a,b)$ being an edge in the graph
  iff the word $ab$ belongs to $L$. When the language is viewed this way, $\varepsilon$-free NFAs recognizing $L$
  correspond to biclique coverings~\cite{JuknaChapter2013} of $G_L$
  with the size of an NFA corresponding to the weight of the associated biclique covering.
  Also, NFAs recognizing $L$ correspond to rectifier networks~\cite{JuknaChapter2013} realizing $G_L$;
  again, with the size of an NFA corresponding to the size of the associated network.

  Hence, proving worst-case lower bounds for the minimum-weight covering of a bipartite graph having a rectifier network of size $n$,
  we get as byproduct worst-case lower bounds for the NFA $\to$ $\varepsilon$-free NFA transformation.
  Thus the bulk of the paper discusses biclique coverings and rectifier networks. These have also been studied for a long time in
  various contexts, see Sections~\ref{sec-notations} and \ref{sec-gap}.

  The paper is organized as follows. In Section~\ref{sec-notations} we give the notations we use for graphs and automata.
  In Section~\ref{sec-gap} 
  we give lower bounds for the possible blow-up between rectifier network size and biclique covering weight.
  In Section~\ref{sec-setcover} we give upper bounds for this blow-up and consider
  the biclique covering problem as a weighted set cover problem. An approximation bound for the greedy algorithm
  given by Lov\'asz~\cite{Lovasz1975383} for the unweighted case is generalized to the weighted case.
  We discuss the connection of this bound to possible upper bounds for the blow-up.
  In Section~\ref{sec-app} we relate these graph-theoretic results to automata theory and prove the aforementioned
  lower bound of $\Omega(n^{3/2-\epsilon})$ for $\varepsilon$-removal.
\section{Notations}
\label{sec-notations}
\subsubsection{Graphs, biclique coverings and rectifier networks}
  Let $[n]$ stand for the set $\{1,\ldots,n\}$.
  For sets $A$ and $B$, $K_{A,B}$ stands for the complete bipartite graph $(A,B,A\times B)$.
  When only the cardinalities $a$ and $b$ of the sets $A$ and $B$ matter, we write $K_{a,b}$ for $K_{A,B}$.
  When $G=(A,B,E)$ is a bipartite graph,
  a \emph{biclique} of $G$ is a complete bipartite subgraph of $G$ and
  the \emph{weight of a biclique} is the number of its vertices.
  A \emph{biclique covering} of $G$ is a collection $\mathcal{C}$ of its bicliques such that each edge of $G$ belongs to at least one member of $\mathcal{C}$,
  the \emph{weight of a covering} is the sum of the weights of the bicliques present in the covering
  and $\Cov(G)$ is the minimum possible weight of a biclique covering of $G$.

  A biclique $K_{a,b}$ has weight $a+b$ while it covers $ab$ edges of $G$.
  In our investigations we will frequently use the inverse $\frac{ab}{a+b}$ of the relative cost of covering the edges by $K_{a,b}$.
  We introduce the shorthand $\Harm(a,b)$ to denote the quantity $\frac{ab}{a+b}$.

  For a bipartite graph $G=(A,B,E)$, a \emph{rectifier network realizing $G$} is a directed
  acyclic graph (DAG) $R=(V,E')$ with $A$ being the set of source nodes of $R$ and $B$ being the set of sink nodes of $R$, satisfying the property
  that $(a,b)\in E$ if and only if $b$ is reachable from $a$ in $R$. The \emph{size} of a rectifier network is the number of its edges.
  The \emph{depth} of a network is the length of its longest path. We let $\Rect(G)$ stand for the size of the smallest rectifier network
  realizing $G$ and $\Rect_k(G)$ for the size of the smallest rectifier network of depth at most $k$ realizing $G$.
  We may assume \emph{w.l.o.g.} that there are no isolated vertices.
\newcommand{\bipgraph}[2]{%
    \begin{tikzpicture}[every node/.style={circle,draw}, scale=0.6]
    \foreach \xitem in {1,2,...,#1}
    {%
    \node at (0,\xitem) (a\xitem) {};
    \node at (2,\xitem) (b\xitem) {};
    }%

    \foreach \x [count=\xi] in {#2}
    {%
    \foreach \tritem in \x
    \draw(a\xi) -- (b\tritem);
    }
    \end{tikzpicture}
}
  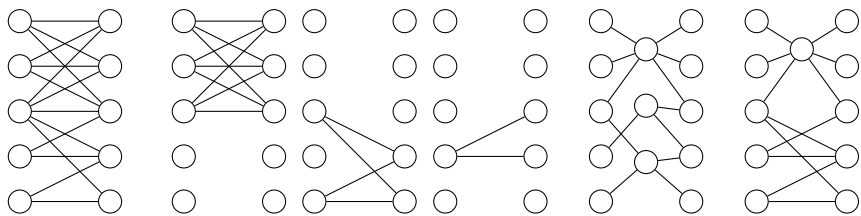
\begin{figure}[ht!]\centering
  \bipgraph{5}{{1,2},{2,3},{1,2,3,4,5},{3,4,5},{3,4,5}} \hfil
  \bipgraph{5}{{},{},{3,4,5},{3,4,5},{3,4,5}}
  \bipgraph{5}{{1,2},{},{1,2},{},{}}
  \bipgraph{5}{{},{2,3},{},{},{}}\hfil
    \begin{tikzpicture}[every node/.style={circle,draw}, scale=0.6]
    \foreach \xitem in {1,2,...,5}
    {%
    \node at (0,\xitem) (a\xitem) {};
    \node at (2,\xitem) (c\xitem) {};
    }%
    \node at (1,1.5*5.0/4.0) (b1) {};
    \node at (1,2.5*5.0/4.0) (b2) {};
    \node at (1,3.5*5.0/4.0) (b3) {};
    \draw (a1)--(b1)--(c1);
    \draw (a3)--(b1)--(c2);
    \draw (a2)--(b2)--(c2);
    \draw (b2)--(c3);
    \draw (a3)--(b3)--(c3);
    \draw (a4)--(b3)--(c4);
    \draw (a5)--(b3)--(c5);
    \end{tikzpicture}  \hfil
    \begin{tikzpicture}[every node/.style={circle,draw}, scale=0.6]
    \foreach \xitem in {1,2,...,5}
    {%
    \node at (0,\xitem) (a\xitem) {};
    \node at (2,\xitem) (c\xitem) {};
    }%
    \node at (1,3.5*5.0/4.0) (b3) {};
    \draw (a1)--(c1)--(a3);
    \draw (a1)--(c2)--(a3);
    \draw (c3)--(a2)--(c2);
    \draw (a3)--(b3)--(c3);
    \draw (a4)--(b3)--(c4);
    \draw (a5)--(b3)--(c5);
    \end{tikzpicture}
  \caption{From left to right: a graph $G$, three bicliques showing $\Cov(G)\leq 13$, a depth-$2$ network corresponding to the bicliques
  having size $13$, and another network showing $\Rect_2(G),\Rect(G)\leq 12$. In the networks, edges are directed from left to right.}
  \end{figure}

  There are constructions of graphs for which only large rectifier networks exist (i.e. having large $\Rect$ value), the dates of the results ranging
  from 1956 till 1996, e.g. graphs $G$ on $n$ vertices with
  $\Rect(G)$ being $\Omega(n^{3/2})$~\cite{Nechiporuk}, $\Omega(n^{5/3})$~\cite{Mehlhorn,Pippenger:1976:SGA:321958.321962,Wegener}
  and $\Omega(n^{2-\epsilon})$~\cite{kollar1996norm}. Also, it is known that $\Rect(G)\leq \frac{n^2}{\log n}$~\cite{Lupanov}.

  In this paper we are interested in the largest possible gap between $\Cov$ and $\Rect$, thus we seek graph classes having a
  \emph{small} $\Rect$ and a \emph{large} $\Cov$ value.

  For $\Cov$, a related notion is that of Steiner $2$-transitive-closure\--spanners~\cite{BermanEtal} (Steiner-$2$-TC-Spanners),
  which is a more general notion for realizing general graphs. The two notions coincide when we look for
  spanners of bipartite graphs, viewed as $2$-level layered directed graphs.
  The authors of~\cite{BermanEtal} show a lower bound for the minimal Steiner-$2$-TC-Spanner a bipartite graph can have.
  Applying these results to our problem, we get that there exist graphs with $\Rect(G)=O(n)$ and $\Cov(G)=\Omega(n\mathrm{polylog}(n))$
  which is exactly the type of result we seek to achieve.
  We use the asymptotic behaviour operators $O$, $\Omega$ and $\Theta$ as well as their ``up to a polylogarithmic factor'' variants
  $\tilde{O}$, $\tilde{\Theta}$, e.g. $f(n)=\tilde{O}(g(n))$ is a shorthand for ``$f=O(g(n)\log^kg(n))$ for some constant $k\geq 0$''.
\subsubsection{Automata}
  A nondeterministic finite automaton, or NFA for short, is a tuple $M=(Q,\Sigma,\delta,q_0,F)$ with $Q$ being an alphabet of states,
  $\Sigma$ being the input alphabet, $\delta\subseteq Q\times\Sigma_\varepsilon\times Q$ a transition relation where
  $\Sigma_\varepsilon$ denotes the set $\Sigma\cup\{\varepsilon\}$, $q_0\in Q$ being the start state and $F\subseteq Q$ being
  the set of accepting states. The automaton is $\varepsilon$-free if there is no transition of the form $(p,\varepsilon,q)\in\delta$.

  A \emph{run} of the above $M$ is a sequence $(p_1,a_1,r_1)\ldots(p_t,a_t,r_t)\in \delta^*$ such that for each
  $1\leq i<t$, $r_{i}=p_{i+1}$, and $p_1=q_0$. The run is accepting if $r_t\in F$. The label of the run is the $\Sigma$-word
  $a_0a_1\ldots a_t$. The \emph{language} recognized by $M$ is $L(M)=\{w\in \Sigma^*:\textrm{there is an accepting run of }M\textrm{ with label }w\}$.

  The \emph{size} of an NFA $M$ is the cardinality $|M|$ of its set $\delta$ of transitions.
  It is well-known that for each NFA $M$ there exists an equivalent $\varepsilon$-free automaton $M'$ with $|M'|=\mathcal{O}(|M|^2)$, i.e.
  $\varepsilon$-elimination can be achieved via a quadratic blow-up. However, no explicit lower bounds are stated in the literature.
\section{Lower bounds for the blow-up}
\label{sec-gap}
  It is clear that $\Rect_{k+1}(G)\leq \Rect_k(G)$ for each $k\geq 0$, and that there exists some $k\geq 0$ with $\Rect_k(G)=\Rect(G)$ and
  $\Rect_k(G)=\Rect_{k'}(G)$ for every $k'>k$. Moreover, $\Rect_2(G)\leq \Cov(G)\leq 2\cdot\Rect_2(G)$: for any collection $\mathcal{C}$ of bicliques
  one can construct a rectifier network $R=(A\uplus\mathcal{C}\uplus B,E')$ with $(a,K_{A',B'})$ and $(K_{A',B'},b)$ being an edge iff
  $a\in A'$ and $b\in B'$, respectively, showing $\Rect_2(G)\leq \Cov(G)$. For $\Cov(G)\leq 2\cdot\Rect_2(G)$, let
  $R=(A\uplus X\uplus B,E')$ be a depth-$2$ rectifier network realizing $G$.
  Then, edges of $E'$ are directed from $A$ to $X$, from $X$ to $B$ and also ``jump edges'' from $A$ directly to $B$ are allowed.
  First, subdividing each such jump edge and adding the intermediate node to $X$ eliminates jump edges and the resulting network
  $R'=(A\uplus X'\uplus B,E'')$ still realizes $G$ in depth $2$ and due to the subdividing, $|E''|\leq 2\cdot|E'|$.
  For a node $x\in X'$, let $A(x)$ be the set of its ancestors (in $A$) and $B(x)$ be the set of its descendants (in $B$).
  Note that if $R$ is minimal, then neither of these sets is empty.
  Then in $G$, each member of $B(x)$ is reachable from any member of $A(x)$, hence $K_{A(x),B(x)}$ is a biclique of $G$ and the
  collection $\mathcal{C}=\{K_{A(x),B(x)}:x\in X'\}$ is a biclique cover of $G$ of size $|E''|\leq 2\cdot|E'|=2\cdot\Rect_2(G)$.
  Observe that the factor of $2$ is tight e.g. in the case of complete matchings.

  Since adding or removing isolated nodes to $G$ does not affect either $\Cov(G)$ or $\Rect(G)$, from now on we assume that $G$ has no isolated vertices.

  It is also clear that \[n\leq\Rect(G)\leq\Cov(G)\leq 2|E(G)|\] where $n$ stands for the number of vertices\footnote{At times $n$ will denote the size of \emph{one} of the two classes of $G$, introducting a factor of $2$ but never causing differences in the growth order.} of $G$:
  in any rectifier network the outdegree of each node $a\in A$ is at least one, and the collection $\{K_{\{a\},\{b\}}:(a,b)\in E\}$ of bicliques
  is a covering of weight $2|E(G)|$. Hence, $\Cov(G)=O(\Rect^2(G))$.
  However, it is not known whether the quadratic gap is attainable: in the rest of the article we seek an $\alpha>1$, being as high as possible,
  such that there exist graphs with arbitrary large $\Rect(G)$ and with $\Cov(G)=\Omega(\Rect^\alpha(G))$.

  To this end, we have to construct graph families having \emph{small} $\Rect$ and \emph{large} $\Cov$.
  To show $\Rect$ is small (usually it will be $\tilde{O}(n)$ in our candidates) it suffices to give a small realizing network.
  On the other side, to see that $\Cov$ is large, we should have good lower bound methods.

  For providing lower bounds, we define the following parameter $\kappa(G)$ of a bipartite graph $G$: let
  \begin{align*}
  \kappa(G)\quad&=\quad\max\{\Harm(|A'|,|B'|):K_{A',B'}\textrm{ is a biclique of }G\}
  \end{align*}
  Observe that by monotonicity, it suffices to take \emph{maximal} bicliques of $G$ into account.

  This graph parameter provides lower bounds not only for $\Cov(G)$ but for $\Rect(G)$:
  \begin{proposition}[See e.g.~\cite{JuknaChapter2013}, Lemma 1.10. and Theorem 1.72.]
  \label{prop-kappa}
  For any bipartite graph $G=(A,B,E)$, it holds that $\frac{|E|}{\kappa(G)}\leq\Cov(G)$ and $\frac{|E|}{\kappa(G)^2}\le\Rect(G)$.
  \end{proposition}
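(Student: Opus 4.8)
The plan is to prove the two inequalities separately; the first is a weighting argument, the second a charging argument exploiting the path structure of rectifier networks.

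First I would establish $\frac{|E|}{\kappa(G)}\le\Cov(G)$. Take an optimal biclique covering $\{K_{A_1,B_1},\dots,K_{A_t,B_t}\}$ of $G$, so that $\Cov(G)=\sum_{i=1}^{t}(|A_i|+|B_i|)$. Since every edge of $G$ lies in at least one $K_{A_i,B_i}$, we have $|E|\le\sum_{i=1}^{t}|A_i|\,|B_i|$. Writing $|A_i|\,|B_i|=\Harm(|A_i|,|B_i|)\cdot(|A_i|+|B_i|)$ and using that each $K_{A_i,B_i}$ is a biclique of $G$, hence $\Harm(|A_i|,|B_i|)\le\kappa(G)$, we obtain $|E|\le\kappa(G)\sum_{i=1}^{t}(|A_i|+|B_i|)=\kappa(G)\cdot\Cov(G)$. (Bicliques with an empty side cover no edge and may be discarded in advance.)

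Next I would establish $\frac{|E|}{\kappa(G)^2}\le\Rect(G)$ by charging each edge of $G$ to an edge of an optimal realizing network $R=(V,E')$. For a vertex $w\in V$ write $A(w)$ for the set of sources with a directed path to $w$ and $B(w)$ for the set of sinks reachable from $w$. For any edge $f=(u,v)\in E'$ and any $a\in A(u)$, $b\in B(v)$, concatenating the relevant paths through $f$ shows $(a,b)\in E$; hence $K_{A(u),B(v)}$ is a biclique of $G$ and $\Harm(|A(u)|,|B(v)|)\le\kappa(G)$. These bicliques cover $E$, but bounding $|E|$ by $\sum_f|A(u)|\,|B(v)|$ is far too wasteful (in a funnel-shaped network it overcounts by a factor of order $n$), so instead each edge of $G$ should be charged to one carefully chosen edge of $R$. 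For $(a,b)\in E$ fix a directed path $a=v_0,v_1,\dots,v_k=b$ in $R$. Along it $|A(v_i)|$ is non-decreasing with $|A(v_0)|=1$, and $|B(v_i)|$ is non-increasing with $|B(v_k)|=1$; moreover $(a,b)\in A(v_i)\times B(v_i)$ for every $i$, so $\frac{1}{|A(v_i)|}+\frac{1}{|B(v_i)|}\ge\frac{1}{\kappa(G)}$. Let $i$ be minimal with $|B(v_{i+1})|\le2\kappa(G)$. If $i=0$ then $|A(v_0)|=1\le2\kappa(G)$; if $i\ge1$ then minimality forces $|B(v_i)|>2\kappa(G)$, and the inequality at $v_i$ then forces $|A(v_i)|<2\kappa(G)$. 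Either way the edge $(v_i,v_{i+1})$ satisfies $|A(v_i)|\le2\kappa(G)$ and $|B(v_{i+1})|\le2\kappa(G)$; charge $(a,b)$ to it. Any edge $(u,v)\in E'$ that receives a charge then satisfies $|A(u)|,|B(v)|\le2\kappa(G)$, and every pair charged to it lies in $A(u)\times B(v)$, so it is charged at most $4\kappa(G)^2$ times. Summing over $E'$ gives $|E|\le4\kappa(G)^2\cdot\Rect(G)$, i.e. $\Rect(G)=\Omega(|E|/\kappa(G)^2)$; the precise constant, immaterial for the applications in this paper, can be tightened as in the cited references.

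The routine parts here are the two monotonicity observations along a path and the verification that $A(u)\times B(v)\subseteq E$ for every network edge. The genuinely delicate point is the charging step: the naive union bound over the edge- or vertex-bicliques is hopeless, and the crux is to exhibit, on \emph{every} source-to-sink path, a single edge whose two cones $A(\cdot)$ and $B(\cdot)$ are \emph{simultaneously} small. This is precisely where the monotonicity of the cone sizes along the path and the local constraint $\Harm\le\kappa(G)$ at each vertex must be combined, and it is also the step that governs the constant in the final bound.
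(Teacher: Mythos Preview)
Your argument is correct. The paper does not supply its own proof of this proposition but simply cites Jukna; your first inequality is the standard weighting argument and yields the bound exactly. For the second inequality, what you rediscover is precisely the ``eligible edge'' lemma (Claim~1.73 in Jukna) that the paper quotes verbatim in the proof of the very next proposition: on every source--sink path there is an edge $(u,v)$ with both $|A(u)|$ and $|B(v)|$ small. Jukna takes the threshold to be $k=\max\{t:K_{t,t}\subseteq G\}$, whereas you take $2\kappa(G)$; since $\kappa\le k\le 2\kappa$ (as the paper itself remarks right after Proposition~\ref{prop-cov-rect2kappa}), the two choices are equivalent up to constants. So your route is essentially the cited one, up to the harmless constant you already flag.
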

  By a similar argument, we can obtain the following inequality as well:
  \begin{proposition}
  \label{prop-cov-rect2kappa}
  For any bipartite graph $G$, it holds that $\Cov(G)\leq\Rect(G)\cdot 2\kappa(G)$.
  \end{proposition}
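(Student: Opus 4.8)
The plan is to turn a minimum-size rectifier network into a biclique covering, reusing the ``relative cost'' bookkeeping that underlies Proposition~\ref{prop-kappa}. Fix a rectifier network $R=(V,E')$ realising $G$ with $|E'|=\Rect(G)$. For $w\in V$ let $A(w)\subseteq A$ be the set of sources from which $w$ is reachable and $B(w)\subseteq B$ the set of sinks reachable from $w$. Since $R$ realises $G$, the pair $(A(w),B(w))$ always induces a biclique of $G$, as does $(A(u),B(v))$ for every $(u,v)\in E'$; hence $\Harm(|A(w)|,|B(w)|)\le\kappa(G)$ and $\Harm(|A(u)|,|B(v)|)\le\kappa(G)$ throughout.

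The structural fact I would exploit --- playing the role of the layering in Proposition~\ref{prop-kappa} --- is monotonicity: if $(u,v)\in E'$ then $A(u)\subseteq A(v)$ and $B(v)\subseteq B(u)$, so along any directed path $w_0,\dots,w_t$ the size $|A(w_i)|$ is non-decreasing while $|B(w_i)|$ is non-increasing. Take an edge $ab\in E(G)$ and a directed $a$--$b$ path; at its start $|A(w_0)|=1\le|B(w_0)|$ and at its end $|A(w_t)|\ge1=|B(w_t)|$, so some last index $j$ has $|A(w_j)|\le|B(w_j)|$ (the degenerate case $j=t$ forces $|A(w_t)|=1$ and is dealt with by the final $K_{1,1}$ of the path). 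For this $j$, the relations $|A(w_j)|\le|B(w_j)|$ and $\Harm(|A(w_j)|,|B(w_j)|)\le\kappa(G)$ force $|A(w_j)|\le2\kappa(G)$, and symmetrically $|A(w_{j+1})|>|B(w_{j+1})|$ forces $|B(w_{j+1})|\le2\kappa(G)$; thus the edge $(w_j,w_{j+1})\in E'$ provides a biclique $K_{A(w_j),B(w_{j+1})}$ of $G$ that contains $ab$ and has both sides of size at most $2\kappa(G)$.

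Performing this for every edge of $G$, with one fixed realising path per edge, selects a set $U\subseteq E'$ of crossover edges; the bicliques $\{K_{A(u),B(v)}:(u,v)\in U\}$ then cover $G$, and there are at most $|E'|=\Rect(G)$ of them. It remains to bound their total weight, and this is where I expect the real work to lie: the crossover above only bounds each biclique's weight by $|A(w_j)|+|B(w_{j+1})|\le4\kappa(G)$, so it gives $\Cov(G)\le4\kappa(G)\Rect(G)$ rather than the sharp $2\kappa(G)\Rect(G)$. Removing the factor of two requires choosing the crossover point with more care --- e.g.\ separating the path's edges according to whether the source-side count or the sink-side count has already dropped to $\kappa(G)$, using a vertex biclique $K_{A(w),B(w)}$ when both bounds hold at one vertex, a single network edge when they hold at adjacent vertices, and exploiting in the remaining ``gap'' case that a biclique of $G$ with both sides above $\kappa(G)$ is already tightly restricted by $\Harm\le\kappa(G)$. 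This case analysis, rather than any further idea, is the crux; everything else mirrors the argument for Proposition~\ref{prop-kappa}.
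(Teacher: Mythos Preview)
Your crossover argument is essentially the paper's approach. The paper cites Claim~1.73 of~\cite{JuknaChapter2013}: letting $k$ be the largest integer with $K_{k,k}$ a biclique of $G$, every source--sink path in $R$ contains an \emph{eligible} edge $(u,v)$, meaning $|A(u)|\le k$ and $|B(v)|\le k$; the bicliques $K_{A(u),B(v)}$ over eligible edges then form a cover of total weight at most $2k\cdot\Rect(G)$. Jukna's eligible edge is found by exactly the monotonicity/crossover reasoning you wrote out, only with the threshold $k$ in place of your ``$|A|\le|B|$'' switch; in fact your crossover edge $(w_j,w_{j+1})$ is automatically eligible, since $|A(w_j)|=\min\{|A(w_j)|,|B(w_j)|\}\le k$ and symmetrically $|B(w_{j+1})|\le k$.

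On the constant you worry about: from $H(a,b)\le\min\{a,b\}$ one gets $\kappa\le k$ (not the reverse), while $\min\{a,b\}\le 2H(a,b)$ gives $k\le 2\kappa$; hence $2k\cdot\Rect(G)\le 4\kappa\cdot\Rect(G)$, which is precisely your bound. The paper's sentence ``$2k$ which in turn is at most $2\kappa(G)$'' has the inequality the wrong way round, so your honest $4\kappa(G)\Rect(G)$ is what the paper's own argument actually delivers. Your sketched case analysis to shave the extra factor of~$2$ is therefore not needed to reproduce the paper's proof, and the paper does not supply one either.
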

  \begin{proof}
  Claim 1.73. in~\cite{JuknaChapter2013} states the following. Let $k$ be the maximum integer with $K_{k,k}$ being
  a biclique of $G$. For any rectifier network $R=(V,E')$ realizing $G$, call an edge $(u,v)\in E'$
  \emph{eligible} iff $|A(u)|\leq k$ and $|B(v)|\leq k$. Then for any edge $(a,b)\in E$ there is a path from $a$ to $b$
  in $R$ containing an eligible edge.

  In that case $\{K_{A(u),B(v)}:(u,v)\in E'\textrm{ is eligible}\}$
  is a covering of $G$, consisting of at most $|E'|=\Rect(G)$ bicliques.
  Each biclique has weight at most $2k$ which in turn is
  at most $2\kappa(G)$ since $H(a,b)\leq \min\{a,b\}$ holds for any $a,b>0$.
  \end{proof}

  It is also worth observing that $k=\Theta(\kappa)$
  since $\min\{a,b\}\leq 2H(a,b)$.

  Our first result considers the bipartite graph corresponding to the $\mod 2$ inner product function.
  \begin{theorem}
  \label{thm-orthogonal}
  Let $d>0$ be an even integer and $G^d_\bot=(A,B,E)$ be the bipartite graph with $A=B=\{0,1\}^d$ and
  $(\bs{u},\bs{v})\in E$ for the vectors $\bs{u},\bs{v}\in\{0,1\}^d$ iff $\bs{u}\bot\bs{v}$ in $\mathbb{Z}_2^d$, i.e.
  iff $\sum\limits_{i\in[d]}u_iv_i=0$ where sum is taken modulo $2$.

  Then $\Rect(G^d_\bot)=\tilde{O}(n)$ and $\Cov(G^d_\bot)=\Omega(n^{3/2})$
  where $n=2^d$ is the number of vertices of $G^d_\bot$.
  \end{theorem}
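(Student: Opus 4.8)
The plan is to prove the two bounds separately. The lower bound $\Cov(G^d_\bot)=\Omega(n^{3/2})$ will drop out of Proposition~\ref{prop-kappa} once $|E|$ and $\kappa(G^d_\bot)$ are estimated, while $\Rect(G^d_\bot)=\tilde O(n)$ requires an explicit logarithmic-depth network construction, which is where essentially all the work lies.

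For the lower bound I would first count edges: fixing $\bs u$, the number of $\bs v$ with $\bs u\bot\bs v$ is $2^d$ if $\bs u=\bs 0$ and $2^{d-1}$ otherwise (the kernel of a nonzero linear functional on $\mathbb{Z}_2^d$), so $|E|=2^{d-1}(2^d+1)=\Theta(n^2)$. Next I would bound $\kappa$. If $K_{A',B'}$ is a biclique of $G^d_\bot$, then each $\bs u\in A'$ is orthogonal to every element of the linear span of $B'$, so $\mathrm{span}(A')\subseteq\mathrm{span}(B')^\bot$; writing $a=\dim\mathrm{span}(A')$, $b=\dim\mathrm{span}(B')$ and using nondegeneracy of the standard form on $\mathbb{Z}_2^d$, this forces $a+b\le d$. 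Since $|A'|\le 2^a$, $|B'|\le 2^b$ and $\Harm$ is increasing in each argument, $\Harm(|A'|,|B'|)\le \frac{2^{a+b}}{2^a+2^b}\le \tfrac12\sqrt{2^{a+b}}\le \tfrac12\,2^{d/2}$, so $\kappa(G^d_\bot)\le \tfrac12\sqrt n$ (equality is witnessed by taking $A'$, $B'$ to be complementary coordinate subspaces, but only the upper bound is needed). Proposition~\ref{prop-kappa} then gives $\Cov(G^d_\bot)\ge |E|/\kappa(G^d_\bot)\ge 2^{2d-1}/2^{d/2-1}=2^{3d/2}=n^{3/2}$.

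For the upper bound I would build a layered DAG of depth $d+2$ that simulates the ``running inner product'' computation. Besides the source layer $\{0,1\}^d$ and the sink layer $\{0,1\}^d$, there are $d+1$ intermediate layers, layer $k$ consisting of the nodes $(k,\sigma,b,w)$ with $\sigma\in\{0,1\}^k$, $b\in\{0,1\}$, $w\in\{0,1\}^{d-k}$; when reached from source $\bs u$, such a node is meant to encode ``$v_1\ldots v_k$ has been guessed equal to $\sigma$, the partial parity $\sum_{i\le k}u_i\sigma_i$ (mod $2$) equals $b$, and $w$ is the still-unused suffix $u_{k+1}\ldots u_d$ of $\bs u$''. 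The edges are: $\bs u\to(0,\varepsilon,0,\bs u)$ for every source $\bs u$; from $(k-1,\sigma,b,w_1w')$ the two edges to $(k,\sigma0,b,w')$ and to $(k,\sigma1,b\oplus w_1,w')$ (guessing $v_k=0$ or $v_k=1$); and $(d,v,0,\varepsilon)\to v$ for every sink $v$. An easy induction on $k$ shows that the layer-$k$ nodes reachable from $\bs u$ are exactly those of the form $(k,\sigma,\langle \bs u_{\le k},\sigma\rangle,\bs u_{>k})$, hence the reachable layer-$d$ nodes are exactly $(d,\bs v,\langle\bs u,\bs v\rangle,\varepsilon)$ over all $\bs v$, and therefore $\bs v$ is reachable from $\bs u$ iff $\langle\bs u,\bs v\rangle=0$. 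Since each intermediate layer has exactly $2^{d+1}$ nodes (the $2^k$ choices for $\sigma$ being cancelled by the $2^{d-k}$ choices for $w$), the total number of edges is $O(d\,2^d)=O(n\log n)=\tilde O(n)$.

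The two counting arguments in the lower bound are routine; the step that needs care — and the reason a naive recursion fails — is the network design. Recursing on a single coordinate splits $G^d_\bot$ into three $G^{d-1}_\bot$-type pieces (three, because the complement graph shows up), which only yields the useless bound $n^{\log_2 3}$; the fix is to carry the entire ``computation tape'' inside the node labels while keeping every layer of size $\tilde O(n)$, which is exactly what storing only the unused suffix $w=\bs u_{>k}$ (rather than all of $\bs u$) accomplishes. I would also verify the obvious safety conditions: the graph is acyclic since all edges go from layer $k-1$ to layer $k$, and no layer-$d$ node $(d,\bs v,0,\varepsilon)$ is reachable from $\bs u$ by any route other than guessing $v_1,\dots,v_d$ in order, because each node records the full guessed prefix $\sigma$.
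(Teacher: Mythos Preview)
Your proof is correct, but both halves take a different route from the paper's.

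For the lower bound, the paper also reduces to Proposition~\ref{prop-kappa} but obtains $\kappa(G^d_\bot)=\tfrac{1}{2}\sqrt{n}$ by invoking Lindsey's lemma, whereas you argue directly via linear algebra (span plus nondegeneracy gives $a+b\le d$, then AM--GM). Your argument is more self-contained and specific to the $\mathbb{Z}_2$-orthogonality structure; Lindsey's lemma is the standard off-the-shelf tool and would cover the analogous $\pm 1$ inner-product matrix as well.

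For the upper bound, the difference is larger. The paper does not build a network tailored to inner product at all; instead it observes that $G^d_\bot$ is \emph{permutation invariant} and proves the general Theorem~\ref{thm-perm}: for any permutation-invariant graph on $\{0,1\}^d$, $\Rect=\tilde O(n)$ (and additionally $\Cov=\tilde O(n^{3/2})$). Their network carries, along a length-$d$ path, the full histogram $f\in\{0,\dots,d\}^{\{0,1\}^2}$ of coordinate pairs seen so far, yielding size $O(n\log^5 n)$. Your ``running inner product'' network is the specialization of this idea that throws away all of $f$ except the single parity bit actually needed for $G^d_\bot$, which is why you get the tighter $O(n\log n)$. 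So your construction is simpler and sharper for this particular graph; the paper's construction is more general, applies to every permutation-invariant graph, and as a bonus also gives the matching $\Cov(G^d_\bot)=\tilde O(n^{3/2})$ upper bound by cutting the network at the middle layer.
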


  The proof is broken into two parts. 
  The lower bound follows from the first inequality of Proposition~\ref{prop-kappa}
  and a special case of Lindsey's lemma~\cite{DBLP:books/daglib/0028687}.
  \begin{proposition}
  \label{prop-bot-cov}
  $\kappa(G^d_\bot)=\frac{\sqrt{n}}{2}$. Thus $\Cov(G^d_\bot)=\Omega(n^{3/2})$.
  \end{proposition}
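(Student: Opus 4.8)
The plan is to pin down $\kappa(G^d_\bot) = \sqrt{n}/2$ exactly; the lower bound on $\Cov$ then follows immediately from the first inequality of Proposition~\ref{prop-kappa} together with the elementary count $|E| = \Theta(n^2)$.

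For the value of $\kappa$, I would use that $\Harm(a,b) = ab/(a+b) = (1/a + 1/b)^{-1}$ is strictly increasing in each of $a$ and $b$. A pair $K_{A',B'}$ is a biclique of $G^d_\bot$ exactly when every $\bs u\in A'$ is orthogonal mod $2$ to every $\bs v\in B'$; since orthogonality over $\mathbb Z_2$ is bilinear, this is equivalent to $B'$ being contained in the orthogonal complement $\langle A'\rangle^\perp$ of the $\mathbb Z_2$-span of $A'$. Writing $k = \dim\langle A'\rangle$, this gives $|A'|\leq 2^k$ and $|B'|\leq 2^{d-k}$, whence by monotonicity $\Harm(|A'|,|B'|)\leq\Harm(2^k,2^{d-k}) = 2^d/(2^k + 2^{d-k})$. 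By AM--GM, $2^k + 2^{d-k}\geq 2\sqrt{2^d} = 2^{1+d/2}$ with equality iff $k = d/2$ --- which is an integer precisely because $d$ is even --- so $\Harm(|A'|,|B'|)\leq 2^{d/2-1} = \sqrt{n}/2$ for every biclique. This bound is attained: for any subspace $U\leq\mathbb Z_2^d$ of dimension $d/2$, taking $A' = U$ and $B' = U^\perp$ yields a biclique with $|A'| = |B'| = \sqrt{n}$ and $\Harm(\sqrt{n},\sqrt{n}) = \sqrt{n}/2$. Hence $\kappa(G^d_\bot) = \sqrt{n}/2$. (Alternatively one can avoid the linear algebra and invoke the special case of Lindsey's lemma bounding the area $|A'|\cdot|B'|$ of any all-ones rectangle in the $2^d\times2^d$ Sylvester--Hadamard matrix by $n$; since $\Harm(a,b)\leq\sqrt{ab}/2$, this gives $\kappa\leq\sqrt{n}/2$ directly.)

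To finish, I would count the edges: $\bs 0$ is orthogonal to all $2^d$ vectors, and every nonzero $\bs u$ is orthogonal to exactly $2^{d-1}$ of them, so $|E| = 2^d + (2^d-1)2^{d-1} = \Theta(n^2)$. Proposition~\ref{prop-kappa} then yields $\Cov(G^d_\bot)\geq |E|/\kappa(G^d_\bot) = \Theta(n^2)/(\sqrt{n}/2) = \Omega(n^{3/2})$, as claimed. The one thing needing care is that the maximization defining $\kappa$ ranges over \emph{all} bicliques, not just the ``balanced'' subspace ones; monotonicity of $\Harm$ together with the span bound handles this uniformly, and the remaining steps (the AM--GM optimum landing on an integer because $d$ is even, and the edge count) are routine.
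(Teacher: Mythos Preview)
Your proof is correct and essentially follows the paper's approach: the paper invokes Lindsey's lemma to bound $|A'|\cdot|B'|\leq n$ for any biclique and then applies Proposition~\ref{prop-kappa}, while you give the direct linear-algebra derivation of that same bound (span/orthogonal-complement dimension count) and also mention the Lindsey route. The explicit construction with $A'=U$, $B'=U^\perp$ for a $(d/2)$-dimensional subspace, the edge count, and the application of $\Cov\geq |E|/\kappa$ are all as intended.
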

  At the same time, $\Rect(G^d_\bot)$ is small enough.
%
%
  To see this, we show $\Rect(G)=\tilde{O}(n)$ for a specific family of bipartite graphs, which we call \emph{permutation invariant} graphs.
  A bipartite graph $G=(\{0,1\}^d,\{0,1\}^d,E)$ is permutation invariant if $(\bs{u},\bs{v})\in E$ implies $(\pi(\bs{u}),\pi(\bs{v}))\in E$
  for any permutation $\pi:[d]\to[d]$ of the coordinate index set. Here $\pi(u_1,\ldots,u_d)$ is defined to be $(u_{\pi(1)},\ldots,u_{\pi(d)})$.
  It is clear that the graphs $G_\bot^d$ are permutation invariant.

  For such graphs the following holds (which also state that within this class of graphs, the bound $\Cov(G)=\Omega(\Rect(G)^{3/2})$ is optimal):
  \begin{theorem}
  \label{thm-perm}
  For permutation invariant graphs $\Rect$ is $\tilde{O}(n)$ and $\Cov$ is $\tilde{O}(n^{3/2})$.
  \end{theorem}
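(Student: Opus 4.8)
The plan is to use the fact that, since $G=(\{0,1\}^d,\{0,1\}^d,E)$ is permutation invariant, whether $(\bs{u},\bs{v})\in E$ depends only on the triple $(|\bs{u}|,|\bs{v}|,\langle\bs{u},\bs{v}\rangle)$, so $G$ is specified by a set $S$ of admissible triples, and — crucially — this triple is \emph{additive} when the coordinate set is split: concatenating vectors adds $|\bs{u}|$, $|\bs{v}|$ and $\langle\bs{u},\bs{v}\rangle$ of the pieces. Viewing a rectifier network realizing $G$ as a monotone $\vee$-circuit with inputs $x_{\bs{u}}$ and outputs $y_{\bs{v}}=\bigvee_{(\bs{u},\bs{v})\in E}x_{\bs{u}}$, I would build it recursively by peeling off one coordinate at a time. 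For a permutation invariant $H$ on $\{0,1\}^m$ and an input vector $\xi$, write $R(H,\xi)$ for the subnetwork producing $\bigvee_{(\bs{s},\bs{t})\in H}\xi_{\bs{s}}$ for all $\bs{t}$. Splitting off the first coordinate, for each $s_1,t_1\in\{0,1\}$ the condition ``$((s_1,\bs{s}'),(t_1,\bs{t}'))\in H$'' becomes ``$(\bs{s}',\bs{t}')\in H^{(s_1,t_1)}$'' for the permutation invariant graph $H^{(s_1,t_1)}$ on $\{0,1\}^{m-1}$ whose admissible set is $S_H$ shifted by $(s_1,t_1,s_1t_1)$; hence each output of $R(H,\xi)$ is the OR of two outputs taken from the four calls $R(H^{(s_1,t_1)},\xi_{(s_1,\cdot)})$, where $\xi_{(b,\cdot)}$ is the restriction of $\xi$ to vectors starting with $b$.

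Unwinding this from $R(G,x)$, a call that survives $\ell$ peeling steps has the form $R(H,\xi)$ where $\xi$ is indexed by vectors with a fixed length-$\ell$ prefix $\bs{b}$ (the $s$-choices made so far) and $S_H$ is $S$ shifted by $(|\bs{b}|,|\bs{t}|,\langle\bs{b},\bs{t}\rangle)$ for the length-$\ell$ string $\bs{t}$ of $t$-choices. The key point is that $H$ depends only on the three integers $(|\bs{b}|,|\bs{t}|,\langle\bs{b},\bs{t}\rangle)$ and $\xi$ only on $\bs{b}$, so there are at most $2^{\ell}\cdot O(\ell^2)$ distinct calls at depth $\ell$; identifying (``sharing'') them, the network has at most $2^{\ell}\cdot O(\ell^2)$ calls at depth $\ell$, each with $2^{d-\ell}$ outputs, each the OR of two depth-$(\ell+1)$ signals, i.e. $2\cdot O(\ell^2)\cdot 2^{d}$ edges per level. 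Summing over $\ell=0,\dots,d-1$ gives $O(d^3)\cdot 2^{d}=\tilde{O}(n)$ edges, so $\Rect(G)=\tilde{O}(n)$.

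For the $\Cov$ bound I would flatten this depth-$d$ network through its middle layer, i.e. the outputs of the calls surviving $\lceil d/2\rceil$ peeling steps. For a node $x$ at the middle layer — an output of a call $R(H,\xi)$ with prefix $\bs{b}$ of length $\lceil d/2\rceil$ — its set $A(x)$ of source-ancestors consists of the inputs $x_{(\bs{b},\bs{s}')}$ with $\bs{s}'$ ranging over the $H$-neighbourhood of the relevant output, so $|A(x)|\le 2^{\lceil d/2\rceil}=O(\sqrt{n})$; dually, its set $B(x)$ of sink-descendants consists of the outputs $y_{(\bs{t}',\bs{r})}$ with $\bs{r}$ the call's output index and $\bs{t}'$ ranging over the strings of prescribed weight and prescribed inner product with $\bs{b}$, so $|B(x)|\le 2^{\lceil d/2\rceil}=O(\sqrt{n})$. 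Each $A(x)\times B(x)$ is a biclique of $G$, and since every edge of $G$ has a path through the middle layer, these bicliques cover $G$; there are $\tilde{O}(n)$ of them (the count of depth-$\lceil d/2\rceil$ signals), so the total weight is $\tilde{O}(n)\cdot O(\sqrt{n})=\tilde{O}(n^{3/2})$. This bounds $\Rect_2(G)$, hence $\Cov(G)\le 2\,\Rect_2(G)=\tilde{O}(n^{3/2})$.

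The main obstacle, and the reason naive approaches fail, is exactly the bookkeeping behind the ``sharing'' step: a priori one fears that peeling $\ell$ coordinates shatters $G$ into exponentially many distinct pieces, and indeed decomposing $G$ into its fixed-overlap subgraphs (those with $|\bs{u}|$, $|\bs{v}|$ and $\langle\bs{u},\bs{v}\rangle$ all prescribed) and treating them separately already loses — such subgraphs can have $\Rect$ and $\Cov$ well above $\tilde{O}(n)$ and $\tilde{O}(n^{3/2})$. One must verify carefully that the subproblem a surviving call solves is pinned down by only three integer parameters (plus the prefix that merely re-indexes the inputs), which collapses the whole construction to $\tilde{O}(n)$ signals; the $\Cov$ bound then additionally requires checking that at the middle layer the ancestor set and the descendant set of each node are each governed by a condition on one half of the bit string and are therefore of size $O(\sqrt{n})$.
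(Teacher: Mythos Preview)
Your proposal is correct and is essentially the same construction as the paper's, phrased as a recursive OR-circuit with memoization rather than as an explicit layered DAG. The paper records, at layer $\ell$, the vector $(v_1,\dots,v_\ell,u_{\ell+1},\dots,u_d)$ together with the four counts $c(\bs{u},\bs{v})(a,b)$ for the first $\ell$ coordinates; you record the prefix $\bs{b}=(u_1,\dots,u_\ell)$, the output index $(v_{\ell+1},\dots,v_d)$, and the triple $(|\bs{b}|,|\bs{t}|,\langle\bs{b},\bs{t}\rangle)$ --- the same information up to the symmetry between $\bs{u}$ and $\bs{v}$ and up to the observation that the four counts are determined by three parameters once $\ell$ (and, in your version, $|\bs{b}|$) is fixed. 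Both proofs then obtain the $\Cov$ bound by cutting the resulting layered network at level $\lceil d/2\rceil$ and bounding ancestor and descendant sets by $O(\sqrt{n})$, exactly as you do. The only quantitative difference is that your sharper bookkeeping yields $O(d^3)\cdot 2^d$ rather than the paper's $O(d^5)\cdot 2^d$ edges, which is immaterial for $\tilde{O}(n)$.
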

  \begin{proof}
  Suppose $G=(A,B,E)$ is permutation invariant with $A=B=\{0,1\}^d$. Let $c:\{0,1\}^d\times\{0,1\}^d\ \to\ \{0,\ldots, d\}^{\{0,1\}\times\{0,1\}}$ be the function defined as
  \[c((u_1,\ldots,u_d),(v_1,\ldots,v_d))(a,b)\ =\ |\{i\in[d]:u_i=a,v_i=b\}|.\]
  That is,
  $c(\bs{u},\bs{v})(a,b)$ is the number of positions $i$ on which $\bs{u}$ is $a$ and $\bs{v}$ is $b$.

  Then, $G$ factors through $c$ in the following sense: if $c(\bs{u},\bs{v})=c(\bs{u}',\bs{v}')$, then $(\bs{u},\bs{v})\in E$ iff $(\bs{u}',\bs{v}')\in E$.
  Indeed, $c(\bs{u},\bs{v})=c(\bs{u}',\bs{v'})$ if and only if there exists a permutation $\pi:[d]\to[d]$ such that $u_i=u'_{\pi(i)}$ and
  $v_i=v'_{\pi(i)}$ for each $i\in [d]$, yielding $(\bs{u},\bs{v})\in E$ if and only if $(\bs{u}',\bs{v}')\in E$.

  Hence there exists a subset $C$ of the finite set $\{0,\ldots,d\}^{\{0,1\}\times\{0,1\}}$ such that $(\bs{u},\bs{v})\in E$ iff $c(\bs{u},\bs{v})\in C$.

  We define a rectifier network $R=(\{0,1\}^d\times \{0,\ldots,d\}^{\{0,1\}\times\{0,1\}} \times\{0,\ldots,d\})$:
  the pair $((u_1,\ldots,u_d),f,\ell),((v_1,\ldots,v_d),f',\ell')$ is an edge of $R$ iff the following conditions hold:
  $\ell'=\ell+1$ (so that $R$ is a DAG of depth $d+1$); for each $i\neq\ell'$, $u_i=v_i$ holds; finally,
  $f'(u_{\ell'},v_{\ell'})=f(u_{\ell'},v_{\ell'})+1$ and for any other $(a,b)\in\{0,1\}\times\{0,1\}$, $f'(a,b)=f(a,b)$.
  
  Then by induction on $\ell'-\ell$ 
  we get that
  there is a path from $((u_1,\ldots,u_d),f,\ell)$ to
  $((v_1,\ldots,v_d),f',\ell')$ iff the following conditions hold:
  $\ell < \ell'$; for each $i\leq\ell$ and $i>\ell'$, $u_i=v_i$; finally, $f'(a,b)=f(a,b)+|\{\ell<i\leq\ell':u_i=a,v_i=b\}|$.
  
  Now let $R'=(V(R)\uplus \{0,1\}^d,E(R)\uplus E')$ with $E'$ consisting of the edges of the form $(\bs{v},f,d)\to \bs{v}$ with $f\in C$.
  Then $R'$ realizes $G$ by identifying each $\bs{u}\in A$ with $(\bs{u},\bs{0},0)$ and each $\bs{v}\in B$ with the element $\bs{v}$ of this
  last layer of $R'$ (here $\bs{0}$ stands for the constant zero function $\bs{0}:(a,b)\mapsto 0$).
  Since in $R'$, there are at most $2^d(\cdot d\cdot\{d+1\}^4)\cdot 2$ edges (each node not belonging to layer $d$
  has outdegree $2$ in $R$ and in the last step, $2^d\times |C|\leq 2^d\cdot \{d+1\}^4$ edges are added), which is
  $O(n\log^5n)$, showing $\Rect(G)=\tilde{O}(n)$.

  For $\Cov(G)=\tilde{O}(n^{3/2})$, let $X$ be the set of vertices of $R'$ of the form $(\bs{u},f,d/2)$. (That is, nodes of the middle layer of $R'$.)
  As before, let $A(x)\subseteq A$, $x\in X$ stand for the set of nodes from which $x$ is reachable in $R'$ and let $B(x)\subseteq B$ stand
  for the set of those nodes which are reachable from $x$ in $R'$.
  Then, since each node of $R$ has indegree at most $2$, we have that $|A(x)|\leq 2^{d/2}$. For $|B(x)|$, since each outdegree in $R$ is $2$,
  we get that there are at most $2^{d/2}$ nodes of the form $(\bs{v},f,d)$ reachable from $x$. In $E'$, the
  outdegree of these nodes is $|C|$ which is at most $(d+1)^{4}$, hence $|B(x)|\leq 2^{d/2}(d+1)^4=\tilde{O}(\sqrt{n})$.
  Thus, the covering $\mathcal{C}=\{K_{A(x),B(x)}:x\in X\}$ has size $\sum_{x\in X}(|A(x)|+|B(x)|)$ which is at most
  $2^d\cdot (d+1)^4\cdot(2^{d/2}+2^{d/2}(d+1)^4)=\tilde{O}(n^{3/2})$. Note that due to the layered structure of $R'$, each $\bs{u}\to\bs{v}$ path
  contains a node belonging to $X$, so $\mathcal{C}$ is indeed a covering.
  \end{proof}

  Thus we have showed that for an arbitrarily $\epsilon>0$ there are graphs $G=G_\bot^d$ having arbitrarily large $\Rect(G)=\tilde{O}(n)$ and with $\Cov(G)=\Omega(\Rect^{3/2-\epsilon}(G))$.
  (Observe that any permutation invariant graph with $\kappa=\Theta(\sqrt{n})$ and $\Theta(n^2)$ edges meets this condition.)

  As an interesting corollary, we get that $\Cov(G_\bot^d)$ is $\tilde{\Theta}(n^{3/2})$ which is $\tilde{\Theta}(\frac{|E|}{\kappa})$
  so in this case the bound of Proposition~\ref{prop-kappa} is optimal up to a log factor.

  A general construction for constructing
  a biclique covering of a graph $G=(A,B,E)$ is the following: starting from a rectifier network $R=(V,E')$
  first one chooses a \emph{cut} $E_0\subseteq E'$ of the edges of $R$ (so that each $a\to b$, $a\in A$, $b\in B$ path contains an edge
  from $E_0$), in which case a covering is $\mathcal{C}(E_0)=\{K_{A(x),B(y)}:(x,y)\in E_0\}$.
  We call coverings of this form \emph{cut-coverings of $R$}.
  (In the proof of Theorem~\ref{thm-perm} we employ a similar construction, choosing a subset $X$ of vertices instead of a subset $E_0$.)
  In the following we state without proof that this
  construction is not optimal, not even up to a polylogarithmic factor, even when $R$ is optimal up to a polylogarithmic factor.
  \begin{theorem}
  \label{thm-negyed}
  Consider the graph $G_\Delta^n=(A,B,E)$ with $A=B=[n]$ and $(i,j)\in E$ iff $d(i,j)\leq\frac{n}{4}$ where $d(i,j)$
  is the modulo $n$ distance $\min\{|i-j|,|n+i-j|\}$. (That is, distance on the circle graph $C_n$.) Then:
  \begin{enumerate}
  \item There exists a rectifier network $R_n$ realizing $G_\Delta^n$ with $\tilde{O}(n)$ edges.
  \item Any cut-covering of $R_n$ has size $\Omega(n^2)$.
  \item At the same time, $\Cov(G_\Delta^n)$ is $O(n^{1+\epsilon})$ for any $\epsilon>0$ where the $O$ notation hides a constant depending only on $\epsilon$.
  \end{enumerate}
  \end{theorem}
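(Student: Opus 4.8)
The three claims are largely independent, and part~2 is the crux. For \textbf{part~1}, one natural realisation of $G_\Delta^n$ is a canonical dyadic decomposition (a segment tree): take a complete binary tree $T$ on the leaf set $B=[n]$, where each node $v$ owns a dyadic subinterval $I_v\subseteq[n]$, $T$ has $O(n)$ edges (each node to its two children), and the leaves reachable from $v$ are exactly $I_v$. Every neighbourhood $N(i)=\{j:d(i,j)\le n/4\}$ is a single circular arc; cutting it at the wrap-around point gives at most two ordinary intervals, each of which decomposes into $O(\log n)$ maximal dyadic intervals. Joining each $i\in A$ to those $O(\log n)$ nodes of $T$ makes the leaves reachable from $i$ exactly $N(i)$, and the resulting network has $O(n)+n\cdot O(\log n)=\tilde{O}(n)$ edges.

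For \textbf{part~3}, assume $4\mid n$ and split the cycle into the four arcs $B_1=[1,n/4],\dots,B_4=[3n/4+1,n]$. Any two vertices inside one arc lie at distance $<n/4$, so each $B_k$ induces a complete bipartite graph covered by one biclique of weight $n/2$; two vertices in non-consecutive arcs are always at distance $>n/4$, so no edges run between them. For cyclically consecutive arcs, substituting the offsets inside the two blocks shows that the bipartite graph between $A\cap B_k$ and $B\cap B_{k+1}$ is, after relabelling, the comparison graph $\{(i,j):j\le i\}$ on $[n/4]\times[n/4]$ (and similarly in the reverse direction). The comparison graph on $[m]\times[m]$ admits a biclique cover of weight $O(m\log m)$: for each bit position $\ell$ and each aligned block of $2^{\ell+1}$ consecutive integers take the biclique joining the elements of the block whose bit $\ell$ is $1$ to those whose bit $\ell$ is $0$, and cover the diagonal by $m$ singletons. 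Summing the contributions gives $\Cov(G_\Delta^n)=O(n\log n)$, which is $O(n^{1+\epsilon})$ for every $\epsilon>0$ with room to spare.

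\textbf{Part~2} is where the real work lies. The starting point is that every \emph{internal} edge of such a network is heavy: a node $v$ at level $\ell$ has $|B(v)|=2^\ell$, while $A(v)=\{a:I_v\subseteq N(a)\}$ has size $\approx n/2-2^\ell$ because each $N(a)$ is an arc of length $\approx n/2$, so the edge from $v$ to a child has weight at least $(n/2-2^\ell)+2^{\ell-1}\ge n/4$ when $2^\ell\le n/2$, and at least $2^{\ell-1}>n/4$ otherwise. Hence a cut using $\Omega(n)$ internal edges already has size $\Omega(n^2)$. If instead a cut $E_0$ uses few internal edges, then for all but a vanishing fraction of leaves $b$ no cut internal edge lies close to $b$ on its root path, and this forces into $E_0$ the source edge joining $a$ to the dyadic piece of $N(a)$ containing $b$ for essentially every one of the $\approx n/2$ sources $a$ adjacent to $b$; grouping these forced source edges by dyadic scale and weighting each by its interval length is meant to recover the remaining $\Omega(n^2)$. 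The main obstacle is that the crude version of this trade-off --- balancing the number of internal edges against the total weight of forced source edges --- yields only $\Omega(n^{3/2})$ for the plain segment tree, since a ``middle'' cut at scale $\sqrt{n}$ is cheap there; pushing the bound to $\Omega(n^2)$ requires fixing in part~1 a more redundant $\tilde{O}(n)$-edge realisation in which no such middle cut exists at any scale (for instance obtained by composing the dyadic construction recursively) and then running the charging argument scale by scale against that recursive structure. This is the step I expect to be hardest, and the only one whose success depends on the precise network chosen in part~1.
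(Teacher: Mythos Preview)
Your Parts~1 and~3 are correct; in fact your Part~3 yields the sharper bound $\Cov(G_\Delta^n)=O(n\log n)$, whereas the paper only proves $O(n^{1+\epsilon})$ via a recursive ball-covering argument.

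The genuine gap is in Part~2, and you have located it yourself: the segment-tree network you build in Part~1 does \emph{not} satisfy the $\Omega(n^2)$ cut-covering lower bound. As you observe, cutting near scale $\sqrt{n}$ costs only $\tilde\Theta(n^{3/2})$ there, so your $R_n$ is simply the wrong network for the theorem. Your proposed fix --- ``a more redundant realisation obtained by composing the dyadic construction recursively'' together with a scale-by-scale charging argument --- is not a proof but a programme: no concrete network is specified, and no argument is given that the cheap-middle-cut phenomenon actually disappears.

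The paper resolves this not by making the analysis in Part~2 harder, but by changing the construction in Part~1 so that Part~2 becomes a two-line calculation. With $n=2^d$, the paper's $R_n$ is layered with layers indexed $-(d-2),\ldots,0,\ldots,d-3$: from layer $k\ge 0$ to layer $k+1$ one may stay or shift by $\pm 2^{k}$, so that from layer $0$ one reaches exactly the points within distance $<n/8$; the negative layers are the mirror image, fanning in from the sources to layer $0$. Hence every source-to-sink path passes through layer $0$, and every layer-$0$ node $x$ has $|A(x)|,|B(x)|=\Theta(n)$. Consequently \emph{every} edge $(u,v)$ of $R_n$ satisfies $|A(u)|+|B(v)|\ge n/8$: either $u$ sits at a nonnegative layer, so $A(u)$ contains some layer-$0$ ancestor-set, or $v$ sits at a nonpositive layer, so $B(v)$ contains some layer-$0$ descendant-set. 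On the other hand, the $n$ near-extremal edges $(a,a+n/4-2)$ of $G_\Delta^n$ are realised by $n$ paths that are cyclic translates of one another and hence pairwise edge-disjoint, so any cut contains at least $n$ edges. Multiplying gives the $\Omega(n^2)$ bound immediately --- no trade-off and no charging are needed.

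So the missing idea is architectural rather than analytic: build $R_n$ symmetrically so that the middle layer is fat on both the ancestor and the descendant side, instead of trying to squeeze a quadratic bound out of a one-sided tree.
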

  Note that for this graph we have $\kappa=\Theta(n)$ since $K_{[n/4],[n/4]}$ is a biclique.
  Hence also for this class of graphs, $\frac{|E|}{\kappa}=\Theta(n)$ approximates $\Cov=O(n^{1+\epsilon})$ relatively well.
  In the next section we show that a closely related formula gives an upper bound for $\Cov(G)$.
\section{Upper bounds for the blow-up}
\label{sec-setcover}
In this section we will show that, under certain assumptions, $\Cov(G)=o(\Rect(G)^2)$ or even $\Cov(G) = O(\Rect(G)^{3/2})$ holds.
Proposition~\ref{prop-cov-rect2kappa} implies the following result:
\begin{theorem} \label{thm:feltetel}
  For any bipartite graph $G$ and $0<\alpha\le1$ with $\Cov(G)\leq\frac{|E|}{\kappa^\alpha}$ we have
  $\Cov(G)\leq 2\Rect(G)^{\beta}$ for some $\beta\leq 1+\frac{1}{1+\alpha}\in[3/2,2)$.
  Hence if $\Cov(G)\leq \frac{|E|}{\kappa^\alpha}$ holds for a family of graphs $G$, then
  $\Cov(G)=O(\Rect(G)^{2-\frac{\alpha}{1+\alpha}})$.
\end{theorem}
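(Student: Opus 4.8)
The plan is to multiply together the two inequalities for $\Cov(G)$ that are already available and let the factors of $\kappa$ cancel. (We may assume $|E|\ge 1$, the statement being vacuous otherwise.) By Proposition~\ref{prop-cov-rect2kappa} we have $\Cov(G)\le 2\,\Rect(G)\,\kappa(G)$, and the hypothesis gives $\Cov(G)\le|E|/\kappa(G)^{\alpha}$. Raising the first to the power $\alpha$ and multiplying by the second,
\[
\Cov(G)^{1+\alpha}\ =\ \Cov(G)^{\alpha}\cdot\Cov(G)\ \le\ \bigl(2\,\Rect(G)\,\kappa(G)\bigr)^{\alpha}\cdot\frac{|E|}{\kappa(G)^{\alpha}}\ =\ \bigl(2\,\Rect(G)\bigr)^{\alpha}\,|E|,
\]
whence $\Cov(G)\le\bigl(2\,\Rect(G)\bigr)^{\alpha/(1+\alpha)}\,|E|^{1/(1+\alpha)}$. (Equivalently: $\Cov(G)\le\min\{\,2\Rect(G)\kappa,\ |E|\kappa^{-\alpha}\,\}$, and the right-hand side, regarded as a function of the free variable $\kappa>0$, is largest at the crossover point $\kappa=(|E|/2\Rect(G))^{1/(1+\alpha)}$, where it equals the bound just derived.)

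It remains to replace $|E|$ by a power of $\Rect(G)$. For this I would use the crude estimate $|E|\le|A|\cdot|B|\le\Rect(G)^{2}$, which holds because in any rectifier network realizing $G$ every source has out-degree at least $1$ and every sink in-degree at least $1$, so $\Rect(G)\ge\max\{|A|,|B|\}$. Substituting,
\[
\Cov(G)\ \le\ \bigl(2\,\Rect(G)\bigr)^{\alpha/(1+\alpha)}\,\bigl(\Rect(G)^{2}\bigr)^{1/(1+\alpha)}\ =\ 2^{\alpha/(1+\alpha)}\,\Rect(G)^{(\alpha+2)/(1+\alpha)}\ \le\ 2\,\Rect(G)^{\,1+\frac{1}{1+\alpha}},
\]
the last step using $\alpha\le1$, so $2^{\alpha/(1+\alpha)}\le 2$. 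This is the desired bound with $\beta=1+\frac{1}{1+\alpha}$; one checks directly that $\beta=2-\frac{\alpha}{1+\alpha}$, and since $\alpha\mapsto 1/(1+\alpha)$ is decreasing we have $\beta\in[3/2,2)$ for $\alpha\in(0,1]$, with $\beta=3/2$ at $\alpha=1$ and $\beta\to2$ as $\alpha\to0^{+}$. Applying the displayed inequality to every member of a family of graphs that satisfies $\Cov\le|E|/\kappa^{\alpha}$ then yields $\Cov(G)=O(\Rect(G)^{2-\alpha/(1+\alpha)})$ for that family.

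The argument is essentially routine, so there is no real obstacle; the one thing to get right is the exponent split in the first display — it must be $\alpha$ on Proposition~\ref{prop-cov-rect2kappa} and $1$ on the hypothesis, precisely so that $\kappa^{\alpha}\cdot\kappa^{-\alpha}$ cancels and $\kappa(G)$ disappears entirely. Everything downstream is bookkeeping, and the only genuine loss is the blunt bound $|E|\le\Rect(G)^{2}$: any class of graphs for which a sharper bound $|E|=O(\Rect(G)^{2-\delta})$ were known would immediately push $\beta$ below $1+\frac{1}{1+\alpha}$.
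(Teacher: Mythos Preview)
Your proof is correct and uses the same two ingredients as the paper's own argument: the hypothesis $\Cov\le|E|/\kappa^{\alpha}$ and Proposition~\ref{prop-cov-rect2kappa} ($\Cov\le 2\,\Rect\cdot\kappa$), combined so that $\kappa$ cancels, followed by the bound $|E|\le\Rect(G)^{2}$ (the paper phrases this as $\delta\le 2$, $r\ge 1$). The only difference is presentational: the paper takes $\log_{n}$ of everything and argues by contradiction, whereas your direct multiplication of the two inequalities is cleaner and in fact yields the same exponent $\beta=(\delta+\alpha r)/(r(1+\alpha))$ that the paper writes down.
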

  \begin{proof}
  Let us introduce the following notation: $|E|=n^\delta$ for $n=|V(G)|$, $\Rect(G)=n^r$ and $\Cov(G)=\frac{|E|}{\kappa^\alpha}$, $0<\alpha\le1$.
  We will show that 
  choosing $\beta=\frac{\delta+\alpha\cdot r}{r(1+\alpha)}$ suffices. (Note that since $\delta\leq 2$ and $r\geq 1$, $\beta$ is indeed at most 
  $1+\frac{1}{1+\alpha}$.)

  By $\Cov=\frac{|E|}{\kappa^\alpha}$ we have $\log_n\Cov = \delta-k\alpha$ where $k=\log_n\kappa$.
  Now assuming for contradiction that $2^{\frac{\alpha}{1+\alpha}}\Rect^\beta<2\Rect^\beta<\Cov$ we get
  \[r\beta=\frac{\delta+\alpha\cdot r}{1+\alpha}<\delta-k\cdot \alpha-\frac{\alpha}{1+\alpha}\log_n 2.\]
  Then direct computation shows that
  $r<\delta-k(1+\alpha)-\log_n 2$
  which is a contradiction, since by $\Rect\geq \frac{\Cov}{2\kappa}$ we have $r\geq \delta-k(1+\alpha)-\log_n 2$.
  \end{proof}
  Simple examples show that the assumption of the theorem does not hold for all graphs.
  A similar argument gives a similar, but somewhat weaker, bound $\Cov(G)=O(\Rect(G)^{2-\varepsilon})$ for some $\varepsilon>0$
  if the condition $\Cov(G)\leq\frac{|E|}{\kappa^\alpha}$ is replaced by $\Cov(G)\leq \polylog n \max\frac{|E(G')|}{\kappa(G')}$,
  where the maximum ranges over induced subgraphs $G'$ of $G$.
  Thus an affirmative answer to the following open problem would imply $\Cov(G)=O(\Rect(G)^{2-\varepsilon})$ for all bipartite graphs.
  \begin{openproblem}
  Is it true that for any bipartite graph $G$ on $n$ vertices,
  \[\Cov(G)\leq \polylog n \max\frac{|E(G')|}{\kappa(G')}\]
  where the maximum ranges over induced subgraphs $G'$ of $G$?
  \end{openproblem}
\subsection{The set cover problem}
  Now we 
  apply the weighted set cover problem to our setting.
  For
  a
  detailed discussion of this problem, and 
  an introduction
  to approximation 
  methods
  see~\cite{Williamson:2011:DAA:1971947}.

  The \emph{weighted set cover} problem is the following: we are given a collection $\mathcal{S}=\{S_1,\ldots,S_t\}$ of subsets of some finite
  universe $A$ of $n$ elements with $\cup\mathcal{S}=A$, and to each $S_i$, a cost $c(S_i)>0$ is associated.
  The goal is to find a subset $\mathcal C$ of $\mathcal S$ such that $\cup\mathcal C=A$ and the total cost $\sum_{S\in\mathcal C} c(S)$ is minimized.
  The problem is well-known to be NP-complete already for the uniform setting when $c(S_i)=1$; however, the following greedy algorithm
  returns a fair enough approximation:
  \begin{algorithmic}
  \State Let $U:=A$ and $\mathcal{C}:=\emptyset$.
  \While{$U\neq\emptyset$}
  \State Choose $S\in\mathcal{S}$ such that $\frac{c(S)}{|S\cap U|}$ is the minimum possible value.
  \State Let $U:=U-S$ and $\mathcal{C}:=\mathcal{C}\cup\{S\}$.
  \EndWhile
  \Return $\mathcal{C}$.
  \end{algorithmic}
%
%
   The following linear program is the standard relaxation of weighted set cover:
\begin{align*}
  \textrm{minimize }  \sum_{i=1}^t c(S&_i) x_i &
  \textrm{subject to } \sum_{i: a\in S_i} x_i &\ge 1 \quad \forall a\in A,\quad x_i\ge 0
  \end{align*}
Denote by $\mathrm{OPT}$ the optimal solution of the weighted set cover problem.
It is well-known~\cite{chvatal1979setcover,Johnson:1973:AAC:800125.804034,Lovasz1975383,Stein1974391} that the value of the solution returned by the above algorithm is 
  bounded by $\ln n \cdot \mathrm{OPT}$, where $n=|A|$, and even by $\ln n \cdot Z^*_{LP}$,
  where $Z^*_{LP}$ denotes the value of an optimal solution to the LP relaxation.

  Now we define a related combinatorial quantity.
  For a subset $B$ of $A$, let $\eta(B)$ stand for the value $\mathop{\min}_{S\in\mathcal{S}}\frac{c(S)}{|S\cap B|}$ which is
  present inside the loop of the greedy algorithm. Note that this value is positive and finite for any $B\subseteq A$.
  Also, let $\eta^*$ stand for $\mathop{\max}_{B\subseteq A}|B|\cdot \eta(B)$. Then we have:
  \begin{proposition}
  \label{prop-rhostar-opt}
  $\eta^*\leq Z^*_{LP}\leq\mathrm{OPT}$.
  \end{proposition}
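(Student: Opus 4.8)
The plan is to prove the two inequalities separately. The second one, $Z^*_{LP}\le\mathrm{OPT}$, is the textbook fact that the LP relaxation lower-bounds the integer optimum; the first one, $\eta^*\le Z^*_{LP}$, is a short averaging argument that can equivalently be read off from weak LP duality.

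First, for $Z^*_{LP}\le\mathrm{OPT}$: take any optimal cover $\mathcal{C}^*\subseteq\mathcal{S}$ and set $x_i=1$ if $S_i\in\mathcal{C}^*$ and $x_i=0$ otherwise. Since $\mathcal{C}^*$ covers $A$, every $a\in A$ lies in some chosen $S_i$, so $\sum_{i:a\in S_i}x_i\ge 1$ for all $a$; hence $(x_i)$ is feasible for the relaxation and its objective value is $\sum_i c(S_i)x_i=\sum_{S\in\mathcal{C}^*}c(S)=\mathrm{OPT}$. Therefore the LP optimum $Z^*_{LP}$ is at most $\mathrm{OPT}$.

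For $\eta^*\le Z^*_{LP}$, I would fix an arbitrary $B\subseteq A$ and show $|B|\cdot\eta(B)\le Z^*_{LP}$; taking the maximum over $B$ then gives the claim. Let $(x_i)$ be an optimal solution of the LP relaxation. Summing the covering constraints over the elements of $B$ and swapping the order of summation gives
\[
|B|\ \le\ \sum_{a\in B}\ \sum_{i:a\in S_i} x_i\ =\ \sum_{i=1}^{t} x_i\,|S_i\cap B|.
\]
By definition of $\eta(B)$ we have $\eta(B)\le \frac{c(S_i)}{|S_i\cap B|}$ whenever $|S_i\cap B|>0$, i.e. $\eta(B)\,|S_i\cap B|\le c(S_i)$; this inequality also holds trivially when $S_i\cap B=\emptyset$, since then its left side is $0<c(S_i)$. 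Substituting $|S_i\cap B|\le c(S_i)/\eta(B)$ into the display yields $|B|\le \frac{1}{\eta(B)}\sum_i c(S_i)x_i=\frac{Z^*_{LP}}{\eta(B)}$, which rearranges to $|B|\cdot\eta(B)\le Z^*_{LP}$.

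Equivalently, one can phrase this via weak duality: the vector $y$ with $y_a=\eta(B)$ for $a\in B$ and $y_a=0$ otherwise is feasible for the dual packing LP (its constraint for $S_i$ reads $\eta(B)\,|S_i\cap B|\le c(S_i)$, which is exactly the bound above), and it has objective value $|B|\cdot\eta(B)$, so weak duality gives $|B|\cdot\eta(B)\le Z^*_{LP}$. I do not expect a genuine obstacle here; the only points requiring a line of care are handling the sets $S_i$ disjoint from $B$ so that the bound $\eta(B)\,|S_i\cap B|\le c(S_i)$ holds uniformly over all $i$, and noting that $\eta(B)$ is well-defined and positive because $\cup\mathcal{S}=A\supseteq B$ forces some $S_i$ to meet $B$ while all costs are positive.
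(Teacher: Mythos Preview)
Your proof is correct and follows essentially the same approach as the paper's: both arguments use the inequality $c(S_i)\ge\eta(B)\,|S_i\cap B|$ together with the LP covering constraints summed over $B$ to obtain $\sum_i c(S_i)x_i\ge\eta(B)\,|B|$. The paper writes this chain starting from the objective and bounding downward (restricting to $i$ with $S_i\cap B\neq\emptyset$ rather than treating the empty-intersection case separately), while you start from $|B|$ and bound upward; your additional dual-LP reformulation and the explicit verification of $Z^*_{LP}\le\mathrm{OPT}$ are not in the paper but are welcome clarifications.
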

  \begin{proof}
  Consider any feasible solution $\bs{x}$ and a subset $B$ of $A$.
  Then,
  \begin{alignat*}{3}
  \sum_{i\in [t]}c(S_i)x_i
  &\geq \sum_{i:S_i\cap B\neq \emptyset}c(S_i)x_i
  = \sum_{i:S_i\cap B\neq\emptyset}\sum_{a\in S_i\cap B}\tfrac{c(S_i)}{|S_i\cap B|}x_i\\
  &\geq \sum_{i:S_i\cap B\neq\emptyset}\sum_{a\in S_i\cap B}\eta(B)x_i
  = \eta(B)\sum_{a\in B}\sum_{S_i\ni a}x_i\\
  &\geq \eta(B)\sum_{a\in B}1
  = \eta(B)\cdot|B|.
  \end{alignat*}
  \end{proof}

  On the other hand, this quantity can be used to give an upper bound for OPT as well.
  The following bound is proven in~\cite{LovaszThesis} for the unweighted case.
  \begin{proposition}
  \label{prop-greedy}
  Let $\textsc{Greedy}$ stand for the cost of the solution returned by the greedy algorithm. Then $\textsc{Greedy}\leq H_n\eta^*$,
  where $H_n$ is $\sum_{i\in[n]}\frac{1}{i}\approx \ln n$.
  \end{proposition}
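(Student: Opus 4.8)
The plan is to follow the greedy algorithm step by step and charge each chosen set's cost against a residual harmonic budget. Write $S^{(1)},S^{(2)},\ldots,S^{(m)}$ for the sets picked by the algorithm, in the order chosen, and let $A=U_0\supsetneq U_1\supsetneq\cdots\supsetneq U_m=\emptyset$ be the corresponding sequence of uncovered sets, so $U_j=U_{j-1}\setminus S^{(j)}$. Put $n_j=|U_j|$; then $n_0=n$ and the $n_j$ strictly decrease to $0$. Since at step $j$ the while-loop condition guarantees $U_{j-1}\neq\emptyset$ and the set $S^{(j)}$ is chosen to minimize $\frac{c(S)}{|S\cap U_{j-1}|}$, this minimum equals $\eta(U_{j-1})$, and because $|S^{(j)}\cap U_{j-1}|=n_{j-1}-n_j$ we get $c(S^{(j)})=\eta(U_{j-1})\,(n_{j-1}-n_j)$.

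Next I would feed in the definition of $\eta^*$. From $\eta^*\geq |U_{j-1}|\cdot\eta(U_{j-1})=n_{j-1}\,\eta(U_{j-1})$ we obtain $\eta(U_{j-1})\leq \eta^*/n_{j-1}$, hence
\[
c(S^{(j)})\ \leq\ \eta^*\cdot\frac{n_{j-1}-n_j}{n_{j-1}}.
\]
Summing over all steps gives $\textsc{Greedy}=\sum_{j=1}^m c(S^{(j)})\leq \eta^*\sum_{j=1}^m\frac{n_{j-1}-n_j}{n_{j-1}}$, so it remains to show $\sum_{j=1}^m\frac{n_{j-1}-n_j}{n_{j-1}}\leq H_n$.

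For the final step, note that $\frac{n_{j-1}-n_j}{n_{j-1}}$ is a sum of $n_{j-1}-n_j$ copies of $\frac1{n_{j-1}}$, and every index $i$ with $n_j<i\leq n_{j-1}$ satisfies $\frac1{n_{j-1}}\leq\frac1i$; therefore $\frac{n_{j-1}-n_j}{n_{j-1}}\leq\sum_{i=n_j+1}^{n_{j-1}}\frac1i=H_{n_{j-1}}-H_{n_j}$. Summing this telescoping bound over $j=1,\ldots,m$ yields $H_{n_0}-H_{n_m}=H_n-H_0=H_n$, which completes the argument. The proof is essentially routine; the only point that needs a little care is the bookkeeping of the harmonic sum, namely that the index ranges $\{n_j+1,\ldots,n_{j-1}\}$ partition $\{1,\ldots,n\}$, which holds precisely because the $n_j$ strictly decrease from $n$ to $0$. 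No genuine obstacle arises, and combined with $\eta^*\leq\mathrm{OPT}$ from Proposition~\ref{prop-rhostar-opt} this recovers the classical $H_n\cdot\mathrm{OPT}$ approximation guarantee as a corollary.
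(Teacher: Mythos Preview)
Your proof is correct and follows essentially the same argument as the paper: both bound the per-element cost at the $k$th iteration by $\eta^*/n_k$ via the definition of $\eta^*$, and then sum to obtain $H_n\eta^*$. Your version is slightly more explicit about the telescoping harmonic bound $\frac{n_{j-1}-n_j}{n_{j-1}}\leq H_{n_{j-1}}-H_{n_j}$, whereas the paper states this step tersely, but the substance is identical.
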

  \begin{proof}
  Let $U_k$ denote the set of uncovered elements at the beginning of the $k$th iteration of the loop of the greedy algorithm
  and let $n_k=|U_k|$.
  Then
  \begin{alignat*}{3}
  \min_{S_i}\tfrac{c(S_i)}{|S_i\cap U_k|}
  &=\min_{S_i}\tfrac{c(S_i)}{|S_i\cap U_k|}\tfrac{|U_k|}{n_k}
  \leq \max_{B\subseteq A}\min_{S_i}\tfrac{c(S_i)}{|S_i\cap B|}\tfrac{|B|}{n_k}\\
  &= \max_{B\subseteq A}\tfrac{|B|}{n_k}\min_{S_i}\tfrac{c(S_i)}{|S_i\cap B|}
  = \max_{B\subseteq A}\tfrac{|B|}{n_k}\eta(B)
  = \tfrac{1}{n_k}\eta^*.
  \end{alignat*}
  Thus, the covering of the elements covered in the $k$th iteration costs at most $\frac{1}{n_k}\eta^*$ for each such element.
  Since $n_1=|A|=n$ and at each iteration, $n_k$ is strictly decreasing,
  the total cost is at most $\sum_{i\in[n]}\frac{1}{i}\eta^*=H_n\eta^*$.
  \end{proof}
  Thus we have the following chain of inequalities:
  \[ \eta^*\le Z^*_{LP}\le\textsc{OPT}\leq\textsc{Greedy}\le H_n\cdot \eta^*. \]
%
%
%
\subsection{Application to biclique coverings}
\label{sec-setcover-cov}
  Determining $\Cov(G)$ for $G=(A,B,E)$ can be viewed as a set cover problem: the universe is $E$,
  the allowed sets are bicliques of $G$, and the cost of a biclique $K_{A',B'}$ is $|A'|+|B'|$.

  In this problem, $\eta$ is the following:
  given a subset $E'$ of $E$ (that is, a \emph{subgraph} $G'=(A',B',E')$ of $G$),
  $\eta(E')$ is defined as
  \[\mathop{\min}_{K_{A'',B''}\subseteq E}\tfrac{|A''|+|B''|}{|E'\cap (A''\times B'')|}\]
  and $\eta^*$ is the maximal possible value of $|E'|\eta(E')$.
  By Proposition~\ref{prop-greedy} we have that $\Cov(G)\leq\eta^*\cdot H_n$.

  Observe that for the biclique $K_{A_0,B_0}=\mathrm{arg\,min}_{K_{A',B'}\subseteq E}\frac{|A'|+|B'|}{|E'\cap (A'\times B')|}$
  we have $A_0\subseteq A'$ and $B_0\subseteq B'$. Indeed, otherwise $K_{A_0\cap A',B_0\cap B'}$ would be a better biclique.
  Thus, the minimizer biclique $K_{A_0,B_0}$ is a biclique of the subgraph of $G$ \emph{induced} by $A_0\cup B_0$.
  It is also clear that for induced subgraphs $G'$ we have $\frac{1}{\kappa(G')}=\eta(E')$ and $|E'|\eta(E')=\frac{|E(G')|}{\kappa(G')}$.

  Hence, there are two cases:
  either $\eta^*$ takes its value on some \emph{induced} subgraph of $G$ up to a polylogarithmic factor,
  in which case the bound $\Cov(G)\geq \max\frac{|E(G')|}{\kappa(G')}$ is essentially optimal up to a polylog factor,
  or not, in which case there are graphs having much larger $\eta^*$ than $\frac{|E|}{\kappa}$.
  In the first case, the remarks following Theorem~\ref{thm:feltetel} imply a subquadratic upper bound for the blow-up.

  \begin{openproblem}
  Determine the gap possible between $\eta^*$ and $\max\frac{|E(G')|}{\kappa(G')}$, where the maximum is taken over all induced subgraphs.
  \end{openproblem}

\section{Application: the cost of $\varepsilon$-removal}
\label{sec-app}
Let $A$ and $B$ be disjoint alphabets (nonempty finite sets) and $L\subseteq AB$ a (finite) language consisting of two-letter words.
Then $L$ can also be viewed as a bipartite graph $G_L=(A,B,L)$ where the notation for $L$ is slightly abused (i.e. $(a,b)$ is an edge
iff the word $ab$ belongs to the language). Without loss of generality we may assume that for each $a\in A$ ($b\in B$, resp.)
there exists a $b\in B$ ($a\in A$, resp.) such that $ab$ is in $L$.

\begin{proposition}
There is some NFA $M$ recognizing $L$ with $|M|=O(\Rect(G_L))$.
\end{proposition}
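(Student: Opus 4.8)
The plan is to turn a minimal rectifier network realizing $G_L$ directly into an NFA, using $\varepsilon$-transitions for the internal edges of the network and two auxiliary states to handle the reading of the two letters of a word. Concretely, I would fix a minimal rectifier network $R=(V,E')$ realizing $G_L$, so that $|E'|=\Rect(G_L)$; recall that $A$ is its set of sources, $B$ its set of sinks, and $A\cap B=\emptyset$. I then build $M=(Q,A\cup B,\delta,q_0,F)$ with $Q=V\uplus\{q_0,q_f\}$, $F=\{q_f\}$, and $\delta$ consisting of: a transition $(q_0,a,a)$ for every source $a\in A$ (which reads the first letter and enters the corresponding source node); a transition $(u,\varepsilon,v)$ for every edge $(u,v)\in E'$; and a transition $(b,b,q_f)$ for every sink $b\in B$ (which reads the second letter and moves to the accepting state).

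Next I would verify correctness. Any accepting run of $M$ on a word $w$ must begin with some transition $(q_0,a,a)$, forcing $w=a\cdots$ with $a\in A$; it then uses a (possibly empty) block of $\varepsilon$-transitions, which by construction is precisely a directed path in $R$ starting at $a$; and it must finish with some transition $(b,b,q_f)$, forcing $w=ab$ and requiring the path to terminate at the sink $b$. Conversely, such an accepting run exists exactly when $b$ is reachable from $a$ in $R$. Since no vertex $b\in B$ receives any transition from $q_0$ (sinks are distinct from sources) and $q_f$ has no outgoing transition, the set of words accepted by $M$ is exactly $\{ab:a\in A,\ b\in B,\ b\text{ reachable from }a\text{ in }R\}$, which by the defining property of a realizing network is $L$.

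Finally I would bound the size: $|M|=|\delta|=|E'|+|A|+|B|$. In any rectifier network (with no isolated vertices) every source has out-degree at least $1$ and every sink has in-degree at least $1$, so $|E'|\ge|A|$ and $|E'|\ge|B|$, whence $|A|+|B|\le 2|E'|$; equivalently one may cite the inequality $n\le\Rect(G)$ from Section~\ref{sec-gap}. Therefore $|M|\le 3|E'|=3\Rect(G_L)=O(\Rect(G_L))$, as required.

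I do not anticipate a genuine difficulty here; the construction is short and the only points that need care are formal. One must route the reading of the first and last letters through the dedicated states $q_0$ and $q_f$ rather than through the internal $\varepsilon$-edges, so that $M$ is a bona fide single-start-state NFA and the length-two structure of the words of $L$ is enforced; and one must invoke $|A|+|B|\le 2\,\Rect(G_L)$ in order to absorb the $|A|+|B|$ extra (non-$\varepsilon$) transitions into the $O(\Rect(G_L))$ bound.
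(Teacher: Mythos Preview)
Your construction is exactly the one in the paper: add fresh states $q_0,q_f$, read the first letter via $(q_0,a,a)$ for $a\in A$, replace every network edge by an $\varepsilon$-transition, and finish with $(b,b,q_f)$ for $b\in B$. The paper's proof omits the correctness verification and simply writes $|M|=|E|+|A|+|B|=O(\Rect(G_L))$; your explicit justification that $|A|+|B|\le 2\,\Rect(G_L)$ (equivalently, invoking $n\le\Rect(G)$) is a welcome addition but the approach is identical.
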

\begin{proof}
Let $R=(V,E)$ be a rectifier network for $G_L$
with $|E|=\Rect(G_L)$.
Then the automaton $M=(V\uplus\{q_0,q_f\},A\cup B,\delta,q_0,\{q_f\})$ with
\[\delta=\{(q_0,a,a):a\in A\}\cup\{(b,b,q_f):b\in B\}\cup\{(p,\varepsilon,q):p\to q\in E\}\]
recognizes $L$ with $|M|=|E|+|A|+|B| = O(\Rect(G_L))$.
\end{proof}

\begin{proposition}
For any $\varepsilon$-free NFA $M$ recognizing $L$, $\Cov(G_L)\leq |M|$.
Moreover, there exists a $\varepsilon$-free NFA $M$ recognizing $L$ with $|M|=\Cov(G_L)$.
\end{proposition}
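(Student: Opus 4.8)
The plan is to prove both halves by passing between accepting runs of length $2$ and bicliques, organized \emph{by state}.

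For the inequality $\Cov(G_L)\le|M|$: let $M=(Q,A\cup B,\delta,q_0,F)$ be an $\varepsilon$-free NFA with $L(M)=L$. Since $M$ is $\varepsilon$-free and every word of $L$ has length $2$, every accepting run has exactly two transitions, of the form $(q_0,a,p)(p,b,r)$ with $a\in A$, $b\in B$, $r\in F$. For each state $p$ I would set $A_p=\{a\in A:(q_0,a,p)\in\delta\}$ and $B_p=\{b\in B:(p,b,r)\in\delta\text{ for some }r\in F\}$. The run description shows that for $a\in A_p$, $b\in B_p$ the word $ab$ is accepted, so $K_{A_p,B_p}$ is a biclique of $G_L$, and that every edge $(a,b)\in L$ lies in some $K_{A_p,B_p}$; hence $\mathcal C=\{K_{A_p,B_p}:A_p,B_p\neq\emptyset\}$ is a biclique covering of $G_L$.

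The one point needing care — and the only real obstacle, a mild one — is the weight bound: a naive count would give $2|M|$, because each state contributes $|A_p|+|B_p|$ and one is tempted to charge both to the incident transitions. The fix exploits that $A$ and $B$ are \emph{disjoint}: partition $\delta$ into the transitions reading a letter of $A$ and those reading a letter of $B$. The $A$-reading class contains all transitions $(q_0,a,p)$ witnessing the sets $A_p$, so $\sum_p|A_p|$ is at most its size; the $B$-reading class contains, for each $p$ and each $b\in B_p$, at least one transition $(p,b,r)$ with $r\in F$, so $\sum_p|B_p|$ is at most its size. Since the two classes are disjoint, $\mathrm{weight}(\mathcal C)=\sum_p(|A_p|+|B_p|)\le|\delta|=|M|$, which yields $\Cov(G_L)\le|M|$.

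For the ``moreover'' part I would start from a minimum-weight biclique covering $\{K_{A_i,B_i}:i\in[m]\}$ of $G_L$ and build $M$ with state set $\{q_0\}\uplus[m]\uplus\{q_f\}$, transition set $\{(q_0,a,i):i\in[m],\,a\in A_i\}\cup\{(i,b,q_f):i\in[m],\,b\in B_i\}$, and $F=\{q_f\}$. Being $\varepsilon$-free, $M$ accepts a word iff it is some $ab$ with $a\in A_i$ and $b\in B_i$, i.e. iff $(a,b)$ is an edge of some $K_{A_i,B_i}$, i.e. iff $ab\in L$; and $|M|=\sum_{i}(|A_i|+|B_i|)=\Cov(G_L)$. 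This direction is routine once the covering-to-automaton dictionary above is in place.
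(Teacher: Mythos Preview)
Your proof is correct and follows the same core idea as the paper: index bicliques by the intermediate state of an accepting two-step run and read off a covering. The one difference is in the weight bookkeeping. The paper first passes to a \emph{minimal} $\varepsilon$-free NFA and argues (via prefix-freeness and trimness) that it has a single accepting state and a clean three-layer structure, so every transition is either some $(q_0,a,p)$ or some $(p,b,q_f)$ and the covering weight equals $|M|$ exactly. You instead keep $M$ arbitrary and split $\delta$ into $A$-labelled and $B$-labelled transitions, using the disjointness of $A$ and $B$ to charge $\sum_p|A_p|$ and $\sum_p|B_p|$ to the two halves separately; this gives the inequality directly without any structural normalization. Both routes are short; yours is slightly more robust (no minimality hypothesis needed), while the paper's yields an exact size correspondence for the normalized automaton, which it then inverts for the ``moreover'' clause just as you do.
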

\begin{proof}
Let $M=(Q,A\cup B,\delta,q_0,F)$ be an $\varepsilon$-free NFA recognizing $L$ of minimal size.
Since $L$ is prefix-free and $M$ is minimal, $F=\{q_f\}$ is a singleton set.
Also, $M$ is \emph{trim}, i.e. for each state $p\in Q$ there exist words $x,y$ with $p\in q_0x$ and $q_f\in py$.
Since every word in $L$ has the same length $2$, to each state $p$ there is an integer $0\leq n_p\leq 2$ such that
whenever $p\in q_0x$ for some word $x$, then $|x|=n_p$. Otherwise if $p\in q_0x_1$ and $p\in q_0x_2$ for words $x_1,x_2$
of different length, then $x_1y$ and $x_2y$ are members of $L$ of different length for any word $y$ with $q_f\in py$,
a contradiction.
Also, it is clear that $n_p=0$ only for $p=q_0$ and $n_p=2$ only for $p=q_f$. Thus if $X$ stands for $Q-\{q_0,q_f\}$,
we get that $M$ is a layered automaton with transitions of the form $(q_0,a,p)$ for $a\in A$ and $p\in X$
and $(p,b,q_f)$ for $b\in B$ and $p\in X$.
Hence, letting $A(p)$ to stand for the set $\{a\in A:(q_0,a,p)\in\delta\}$ and $B(p)$ stand for the set
$\{b\in B:(p,b,q_f)\in\delta\}$ we get that there is an associated biclique covering $\mathcal{C}_M$ of $L$ to $M$
consisting of the bicliques $K_{A(p),B(p)}$, $p\in X$ that has the same size as $M$.

Observe that the transformation is invertible in the sense that to each biclique covering $\mathcal{C}$ such an
automaton of the same size can be constructed, showing the second part of the claim.
\end{proof}

Since by Theorem~\ref{thm-orthogonal} there exist graphs with arbitrary large $\Rect(G)$ and $\Cov(G)=\Omega(\Rect(G)^{3/2-\epsilon})$,
we have the following as byproduct:
\begin{theorem}
\label{thm-epsilon}
For any $\epsilon>0$ and for arbitrarily large $n$ there exist languages (consisting of two-letter words only) which are recognizable
by NFAs of size $n$ but are only recognizable by $\varepsilon$-free NFAs of size $\Omega(n^{3/2-\epsilon})$.

In other words, in order to make an NFA $\varepsilon$-free, an $n^{3/2-\epsilon}$ blow-up in the size can be inevitable.
\end{theorem}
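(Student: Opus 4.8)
The plan is to obtain this as a direct corollary of Theorem~\ref{thm-orthogonal} combined with the two propositions above that relate NFA size to $\Rect$ and $\Cov$; the only real work is pushing the polylogarithmic factors around carefully.

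First I would fix $\epsilon>0$ and, for each positive even integer $d$, take $L_d\subseteq AB$ to be the two-letter language over disjoint alphabets $A=B=\{0,1\}^d$ defined by $ab\in L_d$ iff $\bs{a}\bot\bs{b}$ in $\mathbb{Z}_2^d$, so that the associated bipartite graph is $G_{L_d}=G^d_\bot$. Using the proposition that produces an NFA of size $O(\Rect(G_L))$, together with the bound $\Rect(G^d_\bot)=\tilde{O}(2^d)$ from Theorem~\ref{thm-orthogonal}, I get an NFA $M_d$ recognizing $L_d$ with $n_d:=|M_d|=\tilde{O}(2^d)$; note that $n_d\to\infty$ with $d$. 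Next, using the proposition stating $\Cov(G_L)\le|M|$ for every $\varepsilon$-free NFA $M$ recognizing $L$, together with $\Cov(G^d_\bot)=\Omega(2^{3d/2})$ from Proposition~\ref{prop-bot-cov}, I conclude that every $\varepsilon$-free NFA recognizing $L_d$ has $\Omega(2^{3d/2})$ transitions.

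The remaining step is to re-express $\Omega(2^{3d/2})$ in terms of $n_d$. Since $n_d=\tilde{O}(2^d)$, there is a constant $c\ge 0$ with $n_d\le 2^d d^c$ for all large $d$, so $2^d\ge n_d/d^c$, and also $d=\Theta(\log n_d)$. Hence
\[
2^{3d/2}\;\ge\;\Big(\frac{n_d}{d^c}\Big)^{3/2}\;=\;\Omega\!\left(\frac{n_d^{3/2}}{(\log n_d)^{3c/2}}\right)\;=\;\Omega\big(n_d^{3/2-\epsilon}\big),
\]
because the fixed polylogarithmic factor $(\log n_d)^{3c/2}$ is eventually dominated by $n_d^{\epsilon}$. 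So $L_d$ is recognized by an NFA of size $n_d$ but only by $\varepsilon$-free NFAs of size $\Omega(n_d^{3/2-\epsilon})$, and letting $d$ range over the even positive integers yields such languages for arbitrarily large $n=n_d$; the final sentence of the theorem is just a reformulation of this.

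I do not expect a genuine obstacle: the one point that needs care is making sure the $\tilde{O}$ gap between $n_d$ and $2^d$, once raised to the power $3/2$, collapses to a mere polylogarithmic factor in $n_d$ (which is where $d=\Theta(\log n_d)$ is used) and is therefore swallowed by the arbitrarily small exponent loss $\epsilon$.
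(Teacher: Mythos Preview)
Your proposal is correct and follows essentially the same route as the paper: the paper simply records, after the proof of Theorem~\ref{thm-perm}, that $\Cov(G_\bot^d)=\Omega(\Rect(G_\bot^d)^{3/2-\epsilon})$ and then invokes the two propositions of Section~\ref{sec-app} to obtain Theorem~\ref{thm-epsilon} ``as byproduct''. Your write-up only differs in that you spell out explicitly how the polylogarithmic gap between $n_d$ and $2^d$ is absorbed into the exponent loss $\epsilon$; the paper leaves this implicit.
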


\section{Conclusion, future directions}
\label{sec-conclusion}

We proved a lower bound for the blow-up when transforming NFAs to $\varepsilon$-free NFAs.
We showed that the cost of $\varepsilon$-removal from NFAs is worst-case $\Omega(n^{3/2-\epsilon})$, improving the
previous bound $\Omega(n\log^2n)$.
The largest possible gap is between $\Omega(n^{3/2-\epsilon})$ and $O(n^2)$, just like in the case of going  from CFGs to chain rule free CFGs.
Narrowing these gaps seem to be nontrivial open problems.

We used a graph-theoretic approach by translating the problem into finding large blow-ups between two complexity measures for
bipartite graphs: the rectifier network size $\Rect$ and the minimal weight biclique covering $\Cov$.
We proved that there are graphs with arbitrarily large $\Rect$ value $n$ such that $\Cov=\Omega(n^{3/2-\epsilon})$ for any $\epsilon>0$.
We gave partial results for determining the largest possible blow-up between these quantities. These include a sufficient condition for
a subquadratic upper bound, and the sharpness of a combinatorial bound for the minimal weight biclique covering (obtained by proving a bound for the general
weighted set covering problem).
We also formulated two open problems about related combinatorial bounds, which appear to be of interest in themselves.
Solving these problems may also be useful for determining the largest possible blow-up.
The relationship between $\Rect$ and $\Cov$ can be viewed as a size-depth trade-off problem for depth-2 and unrestricted depth circuits computing sets of
Boolean disjunctions \cite{Wegener:1987:CBF:35517}.
As far as we know, there are many other related open problems, such as establishing a bounded-depth hierarchy.



{
\bibliographystyle{plain}
\bibliography{biblio}{}

\begin{thebibliography}{10}

\bibitem{BermanEtal}
Piotr Berman, Arnab Bhattacharyya, Elena Grigorescu, Sofya Raskhodnikova,
  David~P. Woodruff, and Grigory Yaroslavtsev.
\newblock Steiner transitive-closure spanners of low-dimensional posets.
\newblock {\em Combinatorica}, pages 1--24, 2014.

\bibitem{Blum1983287}
Norbert Blum.
\newblock More on the power of chain rules in context-free grammars.
\newblock {\em Theoretical Computer Science}, 27(3):287 -- 295, 1983.
\newblock Special Issue Ninth International Colloquium on Automata, Languages
  and Programming (ICALP) Aarhus, Summer 1982.

\bibitem{chvatal1979setcover}
V.~Chvatal.
\newblock A greedy heuristic for the set-covering problem.
\newblock {\em Mathematics of Operations Research}, 4(3):233--235, 1979.

\bibitem{Hromkovic:1997:TRE:646512.695338}
Juraj Hromkovic, Sebastian Seibert, and Thomas Wilke.
\newblock Translating regular expressions into small epsilon-free
  nondeterministic finite automata.
\newblock In {\em Proceedings of the 14th Annual Symposium on Theoretical
  Aspects of Computer Science}, STACS '97, pages 55--66, London, UK, 1997.
  Springer-Verlag.

\bibitem{Johnson:1973:AAC:800125.804034}
David~S. Johnson.
\newblock Approximation algorithms for combinatorial problems.
\newblock In {\em Proceedings of the Fifth Annual ACM Symposium on Theory of
  Computing}, STOC '73, pages 38--49, New York, NY, USA, 1973. ACM.

\bibitem{DBLP:books/daglib/0028687}
Stasys Jukna.
\newblock {\em Boolean Function Complexity - Advances and Frontiers}, volume~27
  of {\em Algorithms and combinatorics}.
\newblock Springer, 2012.

\bibitem{JuknaChapter2013}
Stasys Jukna.
\newblock Computational complexity of graphs.
\newblock In M.~Dehmer and F.~Emmert-Streib, editors, {\em Advances in Network
  Complexity}, pages 99--153. Wiley, 2013.

\bibitem{kollar1996norm}
J{\'a}nos Koll{\'a}r, Lajos R{\'o}nyai, and Tibor Szab{\'o}.
\newblock Norm-graphs and bipartite {T}ur{\'a}n numbers.
\newblock {\em Combinatorica}, 16(3):399--406, 1996.

\bibitem{LovaszThesis}
L\'aszl\'o Lov\'asz.
\newblock A kombinatorika minimax t\'eteleir{\H{o}}l.
\newblock {\em Matematikai Lapok}, 26:209--264, 1975.

\bibitem{Lovasz1975383}
L\'aszl\'o Lov\'asz.
\newblock On the ratio of optimal integral and fractional covers.
\newblock {\em Discrete Mathematics}, 13(4):383 -- 390, 1975.

\bibitem{Lupanov}
O.~B. Lupanov.
\newblock On rectifier and switching-and-rectifier schemes.
\newblock {\em Dokl. Akad. Nauk SSSR}, 111:1171--1174, 1956.

\bibitem{Mehlhorn}
Kurt Mehlhorn.
\newblock Some remarks on {B}oolean sums.
\newblock In Jir\'{\i} Becv{\'a}r, editor, {\em Mathematical Foundations of
  Computer Science 1979}, volume~74 of {\em Lecture Notes in Computer Science},
  pages 375--380. Springer Berlin Heidelberg, 1979.

\bibitem{conf/focs/MeyerF71}
Albert~R. Meyer and Michael~J. Fischer.
\newblock Economy of description by automata, grammars, and formal systems.
\newblock In {\em SWAT (FOCS)}, pages 188--191. IEEE Computer Society, 1971.

\bibitem{Nechiporuk}
E.~I. Nechiporuk.
\newblock On a {B}oolean matrix.
\newblock {\em Systems Theory Res.}, 21:236--239, 1971.

\bibitem{Pippenger:1976:SGA:321958.321962}
Nicholas Pippenger and Leslie~G. Valiant.
\newblock Shifting graphs and their applications.
\newblock {\em J. ACM}, 23(3):423--432, July 1976.

\bibitem{Schnitger06regularexpressions}
Georg Schnitger.
\newblock Regular expressions and {NFA}s without $\varepsilon$-transitions.
\newblock In {\em 23th Symposium on Theoretical Aspects of Computer Science
  (STACS 2006), LNCS 3884}, pages 432--443, 2006.

\bibitem{Stein1974391}
S.K Stein.
\newblock Two combinatorial covering theorems.
\newblock {\em Journal of Combinatorial Theory, Series A}, 16(3):391 -- 397,
  1974.

\bibitem{Wegener}
Ingo Wegener.
\newblock A new lower bound on the monotone network complexity of {B}oolean
  sums.
\newblock {\em Acta Informatica}, 13(2):109--114, 1980.

\bibitem{Wegener:1987:CBF:35517}
Ingo Wegener.
\newblock {\em The Complexity of {B}oolean Functions}.
\newblock John Wiley \& Sons, Inc., New York, NY, USA, 1987.

\bibitem{Williamson:2011:DAA:1971947}
David~P. Williamson and David~B. Shmoys.
\newblock {\em The Design of Approximation Algorithms}.
\newblock Cambridge University Press, New York, NY, USA, 1st edition, 2011.

\end{thebibliography}
}
\end{document}